\newcommand*\dashline{\rotatebox[origin=c]{90}{$\dabar@\dabar@\dabar@\dabar@\dabar@\dabar@\dabar@$}}
\newtheorem{thm}{Theorem}[]
\newtheorem{cor}{Corollary}
\newtheorem{lem}{Lemma}
\theoremstyle{remark}
\theoremstyle{definition}
\newtheorem{defn}{Definition}
\begin{document}

\title{Superposition Coding-Based Bounds and Capacity for the Cognitive Z-Interference Channels}

\author{\IEEEauthorblockN{Mojtaba Vaezi and Mai Vu}\\
\IEEEauthorblockA{Department of Electrical and Computer Engineering\\
McGill University\\
Montreal, Quebec H3A 2A7, Canada\\
Email: \{mojtaba.vaezi@mail.mcgill.ca, mai.h.vu@mcgill.ca\}}
}


\maketitle

\begin{abstract}
This paper considers the cognitive interference channel (CIC) with
two transmitters and two receivers, in which the cognitive
transmitter non-causally knows the message and codeword of the
primary transmitter.
We first introduce a discrete memoryless more capable CIC,
which is an extension to the more capable broadcast channel (BC).
Using superposition coding, we propose an inner bound and an outer
 bound on its capacity region. The outer bound is also valid when
 the primary user is under strong interference.
For the Gaussian CIC, this
outer bound applies for $|a| \geq 1 $, where $a$ is the gain of interference
link from secondary user to primary receiver. These capacity inner and
outer bounds are then applied to the Gaussian cognitive Z-interference
channel (GCZIC) where only the primary receiver suffers interference.
Upon showing that jointly Gaussian input maximizes these
bounds for the GCZIC, we evaluate the bounds for this channel.
The new outer bound is strictly tighter than other outer bounds on the capacity
of the GCZIC at strong interference ($a^2 \geq 1 $).
Especially, the outer bound coincides with the inner bound
for $|a| \geq \sqrt{1 + P_1}$ and thus, establishes the capacity of
the GCZIC at this range.
For such an $a$, superposition encoding at the cognitive transmitter
and successive decoding at the primary receiver are capacity-achieving.

\end{abstract}

\IEEEpeerreviewmaketitle

\section{Introduction}\label{sec:intro}
\IEEEPARstart{T}{he} cognitive channel is a special case of an interference channel in
which the second transmitter has complete and non-causal knowledge of
the messages and codewords of the first transmitter. This channel can
be used to model an ideal operating scenario for cognitive radios, a
device that can sense and adapt to the environment intelligently in
coexistence with primary users.
Fundamental limits of such a communication channel are of
interest. Achievable rates of the cognitive channel was first obtained
in \cite{Devroye} by merging
Gel'fand-Pinsker coding \cite{Gel�fand-Pinsker} with the
well-known Han-Kobayashi encoding \cite{Han-Kobayashi} for the
interference channel. At low interference, the capacity region of this
channel in the Gaussian case has recently been established by
\cite{Jovicic-Viswanath} and \cite{Wu-Vishwanath} independently. While
the former considers the Gaussian channel only, the latter studies the
general discrete memoryless channel case, also called the interference
channel with degraded message set (IC-DMS). Cognitive channel capacity
is also known for very strong interference, when both receivers can
decode both messages \cite{Maric1}. At medium interference, the
capacity is still an open problem, with some achievable rate regions
presented in \cite{Jovicic-Viswanath}, \cite{Jiang}, and \cite{Maric2}.


The Z-interference channel (ZIC) is an interference channel in which
only one receiver suffers from interference. Its capacity is also
unknown even for the Gaussian channel, except for some special
cases. From the capacity perspective, it is not important which
transmitter interferes with the other in the ZIC. In a cognitive ZIC,
however, due to asymmetric transmitters, two different ZIC are
conceivable. One is with interference from the cognitive transmitter
to the primary receiver, and the other from the primary transmitter to
the cognitive receiver. While achievable rate regions for the first
one have been studied recently in \cite{Liu-Maric-Z}, \cite{Cao-Z},
there has not been such an investigation for the second one.

In this paper, we study the cognitive channel in general and apply
the results to the Gaussian cognitive ZIC (GCZIC) in
which the cognitive transmitter interferes with the primary
receiver. The contribution can be summarized as
follows.
First, we introduce a new discrete memoryless cognitive
interference channel (DM-CIC) in which the primary receiver is
more capable than the secondary receiver. We term it the more
capable DM-CIC. Then, using superposition coding, we establish
inner and outer bound on its capacity.
We also define a strong interference condition and
show that the proposed outer bound holds under this condition also.
Implicitly,  both inner and outer bounds are also valid
for cognitive Z-interference channel that the interfered
 receiver is more capable than the other receiver.

Second, we show that at strong interference ($a^2 \geq 1$),
where $a$ is the gain of interference
link from secondary user to primary receiver, the outer bound
is applicable to the Gaussian CIC, and thus to the GCZIC.
Then we prove that in Gaussian noise channel, jointly Gaussian distribution
is the optimum distribution for this outer bound; and therefore,
we are able to compute this outer bound for the GCZIC. The outer bound is
proven to be the best outer bound for the GCZIC at strong interference.

Finally, we derive the Gaussian version of the achievable rate region
and prove that when interference is highly strong, i.e.,
$|a| \geq \sqrt{1 + P_1}$,
the inner and outer bounds coincide.
Thus, we establish the capacity region of the GCZIC at this range, and
show that superposition coding is the capacity achieving scheme.
For such a large $a$, superposition encoding at the cognitive transmitter
and successive decoding at the primary receiver are capacity-achieving.
%

The rest of paper is organized as follows. In Section \ref{sec:models},
we discuss models for the Gaussian cognitive interference channel
and the GCZIC as well as the existing capacity result for this
channel at $a^2 \leq 1$. We also introduce the more capable DM-CIC
in this section. In Section \ref{sec:DM-CZIC}, we provide new inner
and outer bounds on the capacity
region of the DM-CIC. Then in Section \ref{sec:cap}, we show
that for $a^2 \geq 1$ we can apply the introduced inner and outer bounds
to the GCZIC; and, we compute these bounds for this range. For
$|a|\geq \sqrt{1 + P_1}$, we prove the outer
bound is equal to the proposed achievable region; and thus,
establish the capacity of the GCZIC. Section
\ref{sec:sum} concludes the paper.


\section{Channel Models, definitions, and existing results}
\label{sec:models}
The classical interference channel (IC) consists of two independent,
non-cooperating pairs of transmitter and receiver, both communicating
over the same channel and interfering each other. A special case of
the IC is the cognitive IC, also called an IC with degraded
message sets (IC-DMS), in which a transmitter, the cognitive one, has
non-causal knowledge of the messages and codewords to be transmitted
by the other transmitter, the primary one. In this section we formally
define this channel and some other derivative of that.

\subsection{Discrete memoryless more capable cognitive interference channel}

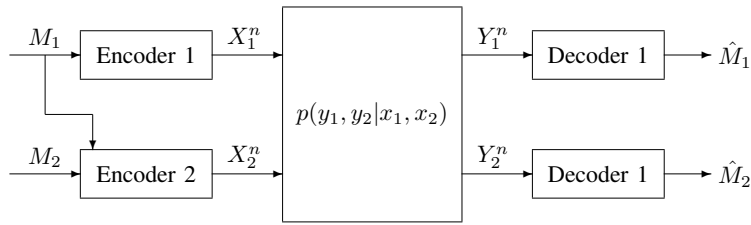
\begin{figure}
\begin{center}
\scalebox {.9}{
\begin{picture}(280,170)

\put(0,50){
\begin{picture}(200,80)
\put(20,20){\framebox(55,20){Encoder 2}}
\put(-10,30){\vector(1,0){30}}
\put(12,37){\makebox(0,0)[r]{$M_{2}$}}
\put(75,30){\vector(1,0){30}}
\put(96,37){\makebox(0,0)[r]{$X^n_{2}$}}
\put(105,10){\framebox(75,90){$p(y_1,y_2|x_1,x_2)$}}
\put(180,30){\vector(1,0){30}}
\put(200,37){\makebox(0,0)[r]{$Y^n_{2}$}}
\put(210,20){\framebox(55,20){Decoder 1}}
\put(265,30){\vector(1,0){20}}
\put(302,30){\makebox(0,0)[r]{$\hat{M}_{2}$}}
\end{picture}
}

\put(0,100){
\begin{picture}(200,80)
\put(20,20){\framebox(55,20){Encoder 1}}
\put(-10,30){\vector(1,0){30}}
\put(12,37){\makebox(0,0)[r]{$M_{1}$}}
\put(75,30){\vector(1,0){30}}
\put(96,37){\makebox(0,0)[r]{$X^n_{1}$}}
\put(180,30){\vector(1,0){30}}
\put(200,37){\makebox(0,0)[r]{$Y^n_{1}$}}
\put(210,20){\framebox(55,20){Decoder 1}}
\put(265,30){\vector(1,0){20}}
\put(302,30){\makebox(0,0)[r]{$\hat{M}_{1}$}}
\put(5,30){\line(0,-1){25}} \put(5,5){\line(1,0){20}}
\put(25,5){\vector(0,-1){15}}
\end{picture}
}
\end{picture}
}
\end{center}

\vspace{-35pt}

\caption{The DM-CIC with two private messages $M_1, M_2$, two inputs
$X_1, X_2$, and two outputs $Y_1, Y_2$. When $p(y_1,y_2|x_1,x_2) =
p(y_2|x_2)p(y_1|x_1,x_2)$, the DM-CIC is converted to the
DM-CZIC.}
  \label{fig:DM-CZIC}
\end{figure}

Consider the discrete memoryless cognitive interference channel
(DM-CIC), also termed the discrete memoryless interference channel
with degraded message sets (IC-DMS), depicted in Figure
\ref{fig:DM-CZIC}, where sender 1 wishes to transmit message $M_1$
to receiver 1 and sender 2 wishes to transmit message $M_2$ to
receiver 2. Message $M_2$ is available only at sender 2, while both
senders know $M_1$. This channel is defined by a tuple $({\cal
X}_1,{\cal X}_2;p(y_1,y_2|x_1,x_2);{\cal Y}_1,{\cal Y}_1)$ where two
inputs ${\cal X}_1,{\cal X}_2$, and two outputs ${\cal Y}_1,{\cal
Y}_1$ are related by a collection of conditional probability density
functions $p(y_1,y_2|x_1,x_2)$.

The discrete memoryless cognitive
Z-interference channel (DM-CZIC) is a DM-CIC in which interference
is one sided. More specifically, we consider the case where the primary
user does not interfere the
secondary one. This only affects the channel transition matrix.
Thus, the DM-CZIC with two private messages $M_1, M_2$, for the two
receivers, two inputs $X_1, X_2$, and two outputs $Y_1, Y_2$ is a
DM-CIC in which
\begin{align}
\label{eq:cond0}
p(y_1,y_2|x_1,x_2) = p(y_2|x_2)p(y_1|x_1,x_2)
\end{align}
 for all $p(x_1,x_2)$.

\begin{defn}
The DM-CIC is said to be more capable if
\begin{align}
\label{eq:defn1}
I(X_1,X_2;Y_1) \geq I(X_1,X_2;Y_2)
\end{align}
 for all $p(x_1,x_2)$.
\end{defn}

\noindent Since the second transmitter can encode and broadcast both messages,
in the absence of the first transmitter this channel reduces to the
well-known more capable DM-BC. In the presence of first sender, this
channel is no longer a BC but an interference channel (IC). However,
due to cognition, the second transmitter has complete and non-causal
knowledge of both messages and codewords; thus, it can act similarly
to the BC's transmitter. This observation motivated us to define  a
condition similar to the one that makes one receiver more capable
than the other one in a DM-BC.

We also define another condition to identify that primary receiver
is in a better situation than secondary receiver in receiving
the signal of cognitive user. We name this strong cognitive inference
condition, as it indicates, roughly speaking, the interference
link from cognitive user to primary reciter is stronger the direct
link of cognitive sender to its corresponding receiver.
\begin{defn}
The DM-CIC is under strong cognitive interference if
\begin{align}
\label{eq:defn2}
I(X_2;Y_1|X_1) &\geq I(X_2;Y_2|X_1)
\end{align}
 for all $p(x_1,x_2)$.
\end{defn}

\noindent Note that in general neither of these two definitions
\eqref{eq:defn1} and \eqref{eq:defn2} implies
the other one.

\subsection{Gaussian cognitive interference channels}

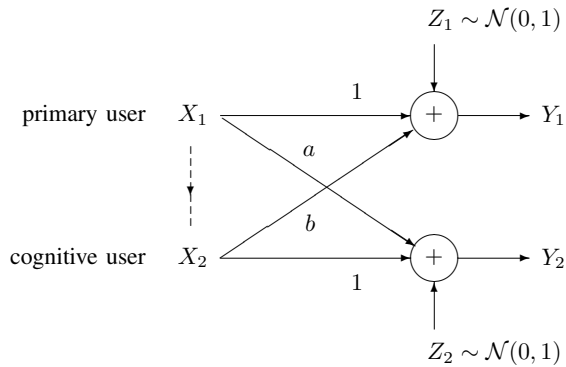
\begin{figure}
\begin{center}
\scalebox{.9}{
\begin{picture}(100,180)
\put(0,50){
\begin{picture}(200,80)
\put(100,10){\circle{20}}
\put(96,8){$+$}
\put(10,10){\vector(1,0){80}}
\put(110,10){\vector(1,0){30}}
\put(100,-20){\vector(0,1){20}}
\put(10,10){\vector(3,2){81}}

\thicklines
\put(5,10){\makebox(0,0)[r]{cognitive user $ \quad X_2$}}
\put(70,0){\makebox(0,0)[r]{$1$}}
\put(50,25){\makebox(0,0)[r]{$b$}}
\put(145,10){\makebox(0,0)[l]{$Y_2$}}
\put(97,-30){\makebox(0,0)[l]{$Z_2 \sim {\cal N} (0, 1)$}}
\end{picture}
}

\put(0,100){
\begin{picture}(200,80)
\put(100,20){\circle{20}}
\put(96,18){$+$}
\put(10,20){\vector(1,0){80}}
\put(110,20){\vector(1,0){30}}
\put(100,50){\vector(0,-1){20}}
\put(11,19){\vector(3,-2){81}}
\put(-4.5,-12){\dashline}
\put(-2.4,-14){\vector(0,-2){0}}

\put(5,20){\makebox(0,0)[r]{primary user $\quad X_1$}}
\put(70,30){\makebox(0,0)[r]{$1$}}
\put(50,5){\makebox(0,0)[r]{$a$}}
\put(145,20){\makebox(0,0)[l]{$Y_1$}}
\put(97,60){\makebox(0,0)[l]{$Z_1 \sim {\cal N} (0, 1)$}}
\end{picture}
}
\end{picture}
}
\end{center}
\vspace{-15pt}
\caption{ The Standard Gaussian cognitive channel with inputs $ X_{1},
  X_{2},$ outputs $ Y_{1},  Y_{2},$ and additive noises $ Z_{1},
  Z_{2}.$ The dashed line shows the flow of non-causal information
  from the primary user to the cognitive user.}
  \label{fig:StandardIC}
\end{figure}

Without loss of generality, we use the standard form of the Gaussian
interference channel \cite{Carleial}, \cite{Kramer}, in which the
gains of both direct links are 1 and both noises are independent with
unit variance. The standard Gaussian cognitive interference channel
is shown in Figure \ref{fig:StandardIC} and is expressed as
\begin{align*}
  Y_1 &= X_1 + bX_2 + Z_1 \\
  Y_2 &= aX_1 + X_2 + Z_2
\end{align*}
Here the interference links are arbitrary constants $a$ and $b$ known
at all the transmitters and receivers; $ X_{1},\; X_{2}$  represent
the primary and secondary users' transmit signals, and $Y_{1},\;
Y_{2}$ their received signals; $Z_1, Z_2$ are independent additive
noises $Z_{i} \sim {\cal N}(0,1)$ ($i=1,2$). We also assume that
transmitted signals are subject to average power constraint as $
E[X_{1}^2] \leq P_{1} $ and $E[X_{2}^2] \leq P_{2} $.

Depending on the values of the interference links $a$ and $b$,
different classes of IC emerge. A special class is the Z-interference
channel (ZIC) when either $a=0$ or $b=0$. For a non-cognitive system,
there is no difference in the capacity analysis of these two ZICs. In
a cognitive system, however, due to asymmetric knowledge at the
transmitters, two different cognitive ZICs are conceivable. One is
when the primary receiver has no interference ($a=0$), and the other
is when the secondary receiver has no interference ($b=0$). These two
GCZIC channels have completely different capacity regions.

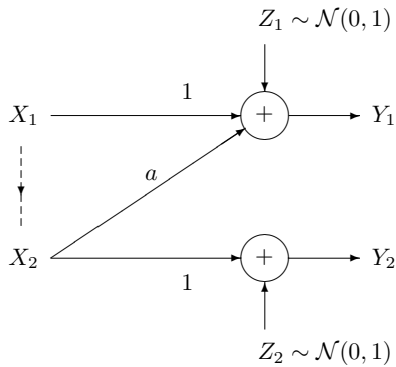
\begin{figure}
\begin{center}
\scalebox{.9}{
\begin{picture}(150,180)

\put(0,50){
\begin{picture}(200,80)
\put(100,10){\circle{20}}
\put(96,8){$+$}
\put(10,10){\vector(1,0){80}}
\put(110,10){\vector(1,0){30}}
\put(100,-20){\vector(0,1){20}}
\put(10,10){\vector(3,2){82}}

\thicklines
\put(5,10){\makebox(0,0)[r]{$X_2$}}
\put(70,0){\makebox(0,0)[r]{$1$}}
\put(55,45){\makebox(0,0)[r]{$a$}}
\put(145,10){\makebox(0,0)[l]{$Y_2$}}
\put(97,-30){\makebox(0,0)[l]{$Z_2 \sim {\cal N} (0, 1)$}}
\end{picture}
}

\put(0,100){
\begin{picture}(200,80)
\put(100,20){\circle{20}}
\put(96,18){$+$}
\put(10,20){\vector(1,0){80}}
\put(110,20){\vector(1,0){30}}
\put(100,50){\vector(0,-1){20}}
\put(-4.5,-12){\dashline}
\put(-2.4,-14){\vector(0,-2){0}}

\put(5,20){\makebox(0,0)[r]{$X_1$}}
\put(70,30){\makebox(0,0)[r]{$1$}}
\put(145,20){\makebox(0,0)[l]{$Y_1$}}
\put(97,60){\makebox(0,0)[l]{$Z_1 \sim {\cal N} (0, 1)$}}
\end{picture}
}
\end{picture}
}
\end{center}
\vspace {-15pt}
\caption{ The Gaussian cognitive Z-interference channel (GZIC)}
 \label{fig:ZIC}
\end{figure}

The capacity of the GCZIC with $a = 0$ can be simply obtained from the
well-known result of dirty paper coding by Costa
\cite{Costa}. Achievable rate and capacity regions of this cognitive
ZIC for the discrete memoryless case can also be found in
\cite{Liu-Maric-Z}. On the other hand, to the best of our knowledge,
not much work has been done on the second GCZIC (with $b=0$). In this
paper, we investigate the capacity region for this GCZIC with
$b=0$. In the rest of this paper, GCZIC refers to this channel.

In the following sections, we establish an inner bound and an outer
 bound for the DM-CIC satisfying either \eqref{eq:defn1} or \eqref{eq:defn2}.
These bounds are valid for the DM-CZIC as well.
Later, we will use these bounds to prove the capacity of the
GCZIC with very strong interference.

\section{Inner and Outer bounds on the capacity of the more capable DM-CIC}
\label{sec:DM-CZIC}

In the first part of this section, we derive an achievable rate region
for the DM-CIC. In the second part, we introduce a new
outer bound on the capacity of the more capable DM-CIC which is valid for
 DM-CIC with strong interference also.
Since the more capable DM-CIC is an extension of the more capable DM-BC,
the achievable region also is an extension of its component's capacity region. The
technique used for achievability is the same as capacity achieving
technique for the conventional
more capable DM-BC. In addition, the outer bound also resembles that of the
more capable DM-BC. Similarly, achievability, error analysis, and
the proof of converse (outer bound) follow those of the DM-BC.
Nevertheless, in general the inner bound and the outer bound are not equal
in the more capable DM-CIC while they are proven to be the same for the
more capable DM-BC. Indeed, this difference, which will be addressed
later in this section, prevents establishing capacity region for the more capable
DM-CIC.

\subsection{A new achievable rate region }
\label{inner}
Theorem 1 provides an achievable region for the DM-CIC.
The achievable technique uses superposition encoding at the
cognitive transmitter. The decoding is based on the joint
typicality.

\begin{thm}
An achievable rate region for the DM-CIC consist of all
rate pairs $(R_{1},R_{2})$ that satisfy
\begin{align}
  R_1 &\leq I(X_1;Y_1|U)   \nonumber \\
  R_2 &\leq I(U;Y_2)  \nonumber \\
  R_1 + R_2 &\leq I(X_1,X_2;Y_1)
  \label{eq:inner}
\end{align}
for some joint distributions that factors as
$p(u)p(x_1)x_2(x_1, u)p(y_2|x_2)p(y_1|x_1,x_2)$, where
$x_2(x_1, u)$ is a function which can be random or deterministic.
\label{thm1}
\end{thm}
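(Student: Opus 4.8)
The plan is to establish achievability via a random coding argument that mirrors the capacity-achieving scheme for the more capable DM-BC, adapted to incorporate the cognitive transmitter's non-causal knowledge of $M_1$ and the codeword $X_1^n$. First I would fix a joint distribution $p(u)p(x_1)x_2(x_1,u)p(y_2|x_2)p(y_1|x_1,x_2)$ and construct the codebook using superposition coding. The auxiliary variable $U$ carries the secondary message $M_2$ as the ``cloud center,'' so I would generate $2^{nR_2}$ codewords $u^n(m_2)$ i.i.d. according to $\prod_i p(u_i)$. For the primary message, I would generate $2^{nR_1}$ codewords $x_1^n(m_1)$ i.i.d. according to $\prod_i p(x_1)$. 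The key structural point is that the cognitive transmitter, knowing both $m_1$ and $m_2$, forms its channel input symbolwise as $x_{2,i} = x_2(x_{1,i}(m_1), u_i(m_2))$, i.e.\ as a deterministic (or randomized) function of the already-chosen $x_1^n(m_1)$ and $u^n(m_2)$; this is exactly how cognition lets transmitter 2 behave like a BC encoder superimposing $X_2$ on top of both the cloud center and the primary codeword.

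Next I would specify the decoding rules and carry out the error analysis. Receiver 2, which is the less capable receiver, only needs to recover $M_2$; I would have it declare $\hat m_2$ if $u^n(\hat m_2)$ is jointly typical with $Y_2^n$, yielding the constraint $R_2 \leq I(U;Y_2)$ by the standard joint-typicality (packing) lemma. Receiver 1 must recover $M_1$, and since the channel is more capable, $Y_1$ is ``strong enough'' to also resolve the cloud center; I would use a decoding strategy in which receiver 1 looks for the unique pair $(m_1,m_2)$ such that $(u^n(m_2), x_1^n(m_1), x_2^n, Y_1^n)$ are jointly typical. Analyzing the probability of the competing error events gives, for a wrong $m_1$ but correct cloud center, the bound $R_1 \leq I(X_1;Y_1\mid U)$, and for both indices wrong, the sum-rate bound $R_1+R_2 \leq I(X_1,X_2;Y_1) = I(U,X_1,X_2;Y_1)$, where the last equality uses that $X_2$ is a function of $(U,X_1)$ so $U$ is conditionally redundant given $(X_1,X_2)$. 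Averaging over the random codebook and invoking the standard argument that a good deterministic code exists then completes the achievability.

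The part that requires the most care is justifying that this decoding at receiver 1 succeeds under the more-capable hypothesis \eqref{eq:defn1} rather than a degradedness hypothesis, and in particular handling the sum-rate event cleanly. Unlike the degraded BC, here receiver 1 is not guaranteed a physically degraded version of $Y_2$, so I cannot simply argue that whatever receiver 2 decodes, receiver 1 can too; instead the more-capable condition must enter through the information-theoretic bounds produced by the error analysis, and one must verify that the resulting region is not enlarged beyond \eqref{eq:inner} by any spurious constraint. I expect the main obstacle to be the bookkeeping of the three error events at receiver 1 and confirming that the cross term (wrong $m_2$, correct $m_1$) does not impose an additional active constraint tighter than those already listed; the functional dependence $x_2(x_1,u)$ is what collapses the general three-way typicality count into precisely the three inequalities of the theorem, so I would make that dependence explicit at each step. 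Finally I would remark that because $X_2$ may be a deterministic function of $(X_1,U)$, the conditional input distribution is degenerate, and standard typicality lemmas still apply to the induced joint distribution on $(U,X_1,X_2,Y_1,Y_2)$.
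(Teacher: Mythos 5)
Your codebook construction, encoding, and the analysis of decoder 2 match the paper's proof, but your decoding rule at receiver 1 opens a genuine gap. You have receiver 1 look for the unique \emph{pair} $(m_1,m_2)$ that is jointly typical with $Y_1^n$. Under that rule the event ``correct $m_1$, wrong $m_2$, jointly typical'' is an error, and by the packing lemma its probability vanishes only if $R_2 \leq I(U,X_2;Y_1|X_1)$, which equals $I(X_2;Y_1|X_1)$ by the Markov chain $U \rightarrow (X_1,X_2) \rightarrow Y_1$. This is a fourth constraint that does not appear in \eqref{eq:inner}, and, contrary to what you assert, it is \emph{not} implied by the other three: Theorem \ref{thm1} is stated for a general DM-CIC, with no more-capable or strong-interference hypothesis. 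Concretely, take $Y_1 = X_1$ and $Y_2 = X_2$ with the choice $x_2(x_1,u)=u$; then $I(X_2;Y_1|X_1)=0$ while the region \eqref{eq:inner} contains points with $R_2 = \min\{H(U),H(X_1)\} > 0$, which are indeed achievable but which your decoder provably cannot attain. (Your verification would go through under the strong-interference condition \eqref{eq:defn2}, since then $I(X_2;Y_1|X_1) \geq I(X_2;Y_2|X_1) \geq I(U;Y_2|X_1) \geq I(U;Y_2)$, using the conditional Markov chain $U \rightarrow X_2 \rightarrow Y_2$ given $X_1$ and the independence of $U$ and $X_1$; but that proves a strictly weaker statement than the theorem.)

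The repair is exactly the paper's decoding rule: receiver 1 declares $\hat{m}_1$ if it is the \emph{unique} index such that $(u^n(m_2), x_1^n(\hat{m}_1), x_2^n(\hat{m}_1,m_2), y_1^n)$ is jointly typical \emph{for some} $m_2$; the cloud-center index need not be recovered correctly. With this rule the cross event you flagged is simply not an error, and only two competing-codeword events remain: wrong $m_1$ with the true $m_2$, giving $R_1 \leq I(X_1,X_2;Y_1|U) = I(X_1;Y_1|U)$ (the equality because $X_2$ is a function of $(U,X_1)$), and wrong $m_1$ together with wrong $m_2$, giving $R_1+R_2 \leq I(U,X_1,X_2;Y_1) = I(X_1,X_2;Y_1)$ --- precisely the inequalities of \eqref{eq:inner}. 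Relatedly, your appeal to the more-capable condition \eqref{eq:defn1} to argue that $Y_1$ can ``resolve the cloud center'' is out of place: no such hypothesis enters the achievability proof at all, and the theorem must hold without it.
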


\noindent The proof uses the superposition coding idea in which
$Y_2$ can only decode $M_2$ (the cloud center)
while $Y_1$ is intended to decode the satellite codeword.
For completeness we provide the proof in the appendix A.

\subsection{More capable BC capacity inspired outer bound}
Inspired by capacity of more capable BC \cite{ElGamalMoreCapable}, \cite{ElGamal},
instead of proving the outer bound for region \eqref{eq:inner},
we prove it for the slightly altered rate region below.
The following outer bound on the capacity holds both for the more capable DM-CIC and
DM-CIC with strong interference.
\begin{thm}
The union of all rate pairs $(R_{1},R_{2})$ such that
\begin{align}
  R_2 &\leq I(U;Y_2)  \nonumber \\
  R_1 + R_2 &\leq I(X_1;Y_1|U) + I(U;Y_2)   \nonumber \\
  R_1 + R_2 &\leq I(X_1,X_2;Y_1)
 \label{eq:Outer}
\end{align}
for some joint distributions $p(u,x_1,x_2)p(y_1,y_2|x_1,x_2)$
constitutes an outer bound on the capacity region of a DM-CIC
satisfying either the more capable condition in \eqref{eq:defn1} or
strong interference condition in \eqref{eq:defn2}.
\label{thm2}
\end{thm}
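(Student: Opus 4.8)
The plan is to establish the converse by adapting the more capable DM-BC argument to the cognitive setting, where $X_1^n$ is a function of $M_1$ only while $X_2^n$ is a function of $(M_1,M_2)$, receiver $1$ decodes $M_1$, and receiver $2$ decodes $M_2$. Fix any sequence of codes with average probability of error tending to zero. The starting point is Fano's inequality, $H(M_1\mid Y_1^n)\le n\epsilon_n$ and $H(M_2\mid Y_2^n)\le n\epsilon_n$ with $\epsilon_n\to0$, together with the independence of $M_1$ and $M_2$. I would use the single auxiliary identification $U_i=(M_2,Y_2^{i-1})$ for all three inequalities and close the argument with a time-sharing variable $Q$ uniform on $\{1,\dots,n\}$, setting $U=(U_Q,Q)$, $X_1=X_{1Q}$, $X_2=X_{2Q}$, $Y_1=Y_{1Q}$, $Y_2=Y_{2Q}$, so that all bounds hold for a single joint law of the form $p(u,x_1,x_2)p(y_1,y_2\mid x_1,x_2)$.

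The bound $R_2\le I(U;Y_2)$ is immediate: from $nR_2\le I(M_2;Y_2^n)+n\epsilon_n$ and the chain rule, $I(M_2;Y_2^n)=\sum_i I(M_2;Y_{2i}\mid Y_2^{i-1})\le\sum_i I(M_2,Y_2^{i-1};Y_{2i})=\sum_i I(U_i;Y_{2i})$. For the sum rates, independence and Fano give $n(R_1+R_2)\le I(M_1;Y_1^n\mid M_2)+I(M_2;Y_2^n)+2n\epsilon_n$. To reach $R_1+R_2\le I(X_1;Y_1\mid U)+I(U;Y_2)$, I would single-letterize $I(M_1;Y_1^n\mid M_2)=\sum_i I(M_1;Y_{1i}\mid M_2,Y_1^{i-1})$ and convert the conditioning from $Y_1^{i-1}$ to $Y_2^{i-1}$ via the Csiszár sum identity; the residual terms produced by this swap are exactly the excess $\sum_i I(Y_2^{i-1};Y_{2i})$ in the identity $\sum_i I(U_i;Y_{2i})=I(M_2;Y_2^n)+\sum_i I(Y_2^{i-1};Y_{2i})$, so the two pieces combine into $\sum_i\big[I(X_{1i},X_{2i};Y_{1i}\mid U_i)+I(U_i;Y_{2i})\big]$.

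For the third inequality $R_1+R_2\le I(X_1,X_2;Y_1)$, I would rewrite the sum-rate upper bound as $I(M_1,M_2;Y_1^n)+\big[I(M_2;Y_2^n)-I(M_2;Y_1^n)\big]$ and use $I(M_1,M_2;Y_1^n)\le\sum_i I(X_{1i},X_{2i};Y_{1i})$; it then suffices to show the bracket is non-positive, i.e. $I(M_2;Y_1^n)\ge I(M_2;Y_2^n)$. This is precisely where the hypothesis enters. Since neither the more capable condition nor the strong-interference condition tensorizes, I would expand the difference by the Csiszár sum identity, introduce the genie $W_i=(M_2,Y_2^{i-1},Y_{1,i+1}^n)$, and reduce the comparison to the per-letter statement $I(X_{1i},X_{2i};Y_{1i}\mid W_i)\ge I(X_{1i},X_{2i};Y_{2i}\mid W_i)$ under \eqref{eq:defn1}, or, after additionally conditioning on $X_{1i}$, to $I(X_{2i};Y_{1i}\mid X_{1i},W_i)\ge I(X_{2i};Y_{2i}\mid X_{1i},W_i)$ under \eqref{eq:defn2}; each follows by applying the single-letter hypothesis to the conditional input law $p(x_1,x_2\mid W_i=w_i)$, because the channel is memoryless and $W_i-(X_{1i},X_{2i})-(Y_{1i},Y_{2i})$.

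I expect the main obstacle to be exactly this last reduction, namely passing from the $n$-letter inequality to the single-letter conditions \eqref{eq:defn1}/\eqref{eq:defn2}. Because these conditions are not preserved under products, one cannot argue letter-by-letter naively, and the correct genie $(Y_2^{i-1},Y_{1,i+1}^n)$ together with the Csiszár sum identity is essential to make the per-symbol application legitimate; getting the bookkeeping of the swapped terms to line up with the expansion of $I(U;Y_2)$ is the delicate part. A secondary point to verify is the appearance of $I(X_1;Y_1\mid U)$ rather than $I(X_1,X_2;Y_1\mid U)$: the converse naturally yields the latter, and one should check that it coincides with the stated form under the functional dependence $x_2(x_1,u)$ carried by the coding distribution, or else read $X_1$ in \eqref{eq:Outer} as the full input seen by $Y_1$.
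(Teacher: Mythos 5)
Your overall toolkit (Fano plus message independence, an identified auxiliary, the Csisz\'ar sum identity, time sharing) is the right one, but two of your three bounds do not go through as planned, and both failures trace to the same source: your auxiliary and genie choices do not match what the identity and the hypothesis can actually deliver.

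First, the auxiliary $U_i=(M_2,Y_2^{i-1})$ cannot support the bound $R_1+R_2\le I(X_1;Y_1|U)+I(U;Y_2)$. The Csisz\'ar sum identity exchanges the \emph{past} of one output with the \emph{future} of the other,
\begin{align*}
\sum_{i=1}^n I(Y_{2,i+1}^n;Y_{1i}\mid Y_1^{i-1},W)=\sum_{i=1}^n I(Y_1^{i-1};Y_{2i}\mid Y_{2,i+1}^n,W),
\end{align*}
so it cannot perform the past-to-past conversion of conditioning from $Y_1^{i-1}$ to $Y_2^{i-1}$ that your plan calls for. Carrying out the bookkeeping with your $U_i$ leaves the residual $\sum_i\left[I(Y_2^{i-1};Y_{1i}\mid M_2,Y_1^{i-1})-I(Y_2^{i-1};Y_{2i})\right]$, which is neither zero nor of fixed sign; the claimed exact cancellation against $\sum_i I(Y_2^{i-1};Y_{2i})$ does not occur. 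The paper, following El Gamal's more capable BC converse, takes $U_i=(M_2,Y_1^{i-1},Y_{2,i+1}^n)$ --- both the past of $Y_1$ and the future of $Y_2$ --- precisely so that the cross terms created when merging $\sum_i I(M_1;Y_{1i}\mid M_2,Y_1^{i-1})$ with $\sum_i I(M_2;Y_{2i}\mid Y_{2,i+1}^n)$ telescope away via the identity; with that choice this inequality holds for \emph{every} DM-CIC, no hypothesis needed.

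Second, and more fundamentally, your route to the third bound --- proving $I(M_2;Y_1^n)\ge I(M_2;Y_2^n)$ --- cannot be completed from \eqref{eq:defn1}. The double Csisz\'ar expansion you describe yields exactly
\begin{align*}
I(M_2;Y_1^n)-I(M_2;Y_2^n)=\sum_{i=1}^n\left[I(M_2;Y_{1i}\mid W_i')-I(M_2;Y_{2i}\mid W_i')\right],\qquad W_i'=(Y_2^{i-1},Y_{1,i+1}^n),
\end{align*}
which compares information about the \emph{message} $M_2$, i.e.\ a less-noisy-type statement, whereas \eqref{eq:defn1} only compares information about the full inputs $(X_1,X_2)$. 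Writing $I(X_{1i},X_{2i};Y_{ji}\mid W_i')=I(M_2;Y_{ji}\mid W_i')+I(X_{1i},X_{2i};Y_{ji}\mid W_i',M_2)$ for $j=1,2$ shows each per-letter difference above is a difference of two quantities that \eqref{eq:defn1} makes nonnegative, hence of indeterminate sign: more capable does not imply less noisy, and data processing bounds $I(M_2;\cdot)$ only from above, never from below, so the reduction to your per-letter input statement is not legitimate. The paper avoids this trap by never lower-bounding a message information: it bounds $I(M_2;Y_2^n|M_1)+I(M_1;Y_1^n)$ with the genie $V_i=(M_1,Y_1^{i-1},Y_{2,i+1}^n)$, uses data processing $I(M_2;Y_{2i}|V_i)\le I(X_{1i},X_{2i};Y_{2i}|V_i)$, applies \eqref{eq:defn1} \emph{conditionally on} $V_i$ (legitimate because the condition holds for all $p(x_1,x_2)$), and recombines with $I(V_i;Y_{1i})$ through the Markov chain $V_i\to(X_{1i},X_{2i})\to Y_{1i}$ to reach $\sum_i I(X_{1i},X_{2i};Y_{1i})$; under \eqref{eq:defn2} it instead conditions on $(M_1,X_1^n)$ and invokes the standard $n$-letter extension of the strong-interference condition. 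Your closing remark about $I(X_1;Y_1|U)$ versus $I(X_1,X_2;Y_1|U)$ is a legitimate concern --- the paper justifies the smaller term by a Markov chain $M_1\to(X_1,U)\to Y_1$, which is delicate in the cognitive setting where $X_2$ also depends on $M_1$ --- but noting it does not repair the two gaps above.
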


\noindent The proof is based on the proof of converse for the more
capable BC in \cite{ElGamalMoreCapable} but adapted for the DM-CIC.
For completeness, we provide the proof in the appendix B.
In the BC, this new form is shown to be an alternative representation
of the rate region in Theorem \ref{thm1}  \cite{ElGamalMoreCapable};
thus, proving the converse for this equivalent region
establishes the capacity of the more capable BC.
However, these two regions are not equivalent for DM-CIC
because of different input distributions.
Therefore, Theorem \ref{thm2} provides only
an outer bound for the capacity of the more capable DM-CIC and DM-CZIC. Nevertheless,
later in this paper we show that this outer bound is tight for the GCZIC
at very strong interference.

\section{Bounds and capacity of the Gaussian cognitive Z channel at strong interference}
\label{sec:cap}

The GCZIC at weak interference ($a^2\leq 1$)
is a special case of the
Gaussian cognitive interference channel (when $b=0$), for which the
capacity region is known for $a\leq 1$ and any real $b$
\cite{Jovicic-Viswanath}. The cognitive user partially devotes its
power to help send the codeword of the
primary user. It dirty paper encodes its own codeword
against the codeword of the primary user. The
cognitive receiver performs dirty paper decoding to extract
its message free of interference \cite{Jovicic-Viswanath} and
\cite{Wu-Vishwanath}. At strong interference regime ($a^2\geq 1$) however,
the capacity of the GCZIC is not known in general. An outer bound on the
 capacity of the Gaussian cognitive IC was established by
Maric et al. in \cite{Maric1}, Corollary 1.

In this section, we first find the condition in which the GCZIC
is a more capable or under strong interference.
We show that for the Gaussian CIC, strong interference conditions
is equivalent to $a^2\geq 1$. Thus for
$a^2\geq 1$, we provide new inner and outer bounds by evaluating the inner
and outer bounds in Section \ref{sec:DM-CZIC}. Finally,
we prove that these inner and outer bounds coincide when the interference
is very strong ($|a| \geq \sqrt{1 + P_1 }$),
thus establish the capacity of the GCZIC for
this range of interference.

\subsection{More capable and strong interference conditions for the GCZIC }
\label{strongIC}

In this section we explore the conditions for which Theorem~\ref{thm2}
holds for the GCZIC; i.e, we find the condition that the GCZIC is
either more capable \eqref{eq:defn1} or under strong interference
\eqref{eq:defn2}.

Intuitively, the GCZIC is more capable when interference
is very strong. considers the equivalent channel in
Figure \ref{fig:ZIC-Eq} which is achieved by
manipulating Figure \ref{fig:ZIC}.
 Since both figures have the same $Y_2$, and $Y_1$ is
a scaled transformation of $\tilde{Y}_1$ $(Y_1 = a \tilde{Y}_1)$,
the channels depicted in these figures are equivalent from capacity
point of view. The equivalent channel in Figure \ref{fig:ZIC-Eq}
looks like a broadcast channel if we consider $X_1/a$ as
interference. Without $X_1/a$ this channel is a degraded BC and its
capacity is known. Now, considering the interference $X_1/a$
as noise, and assuming that $a$ is large enough that the
power associated with noise plus interference ($Z_1/a + X_1/a$) is
less than noise power at $Y_2$, then $\tilde{Y}_1$ can be more
capable than $Y_2$.

\begin{figure}[!tb]
\begin{center}
\scalebox{.9}{
\begin{picture}(200,180)

\put(0,50){
\begin{picture}(200,80)
\put(100,10){\circle{20}} \put(96,8){$+$}
\put(10,10){\vector(1,0){80}} \put(110,10){\vector(1,0){30}}
\put(100,-20){\vector(0,1){20}} \put(10,10){\vector(3,2){81}}

\thicklines \put(5,10){\makebox(0,0)[r]{$X_2$}}
\put(70,0){\makebox(0,0)[r]{$1$}} \put(55,45){\makebox(0,0)[r]{$1$}}
\put(145,10){\makebox(0,0)[l]{$Y_2$}}
\put(97,-30){\makebox(0,0)[l]{$Z_2$}}
\end{picture}
}

\put(0,100){
\begin{picture}(200,80)
\put(100,20){\circle{20}} \put(96,18){$+$}
\put(110,20){\vector(1,0){30}} \put(160,20){\vector(1,0){30}}
\put(150,20){\circle{20}} \put(146,18){$+$}
\put(100,50){\vector(0,-1){20}}\put(150,50){\vector(0,-1){20}}

\put(195,20){\makebox(0,0)[l]{$\tilde{Y}_1$}}
\put(97,60){\makebox(0,0)[l]{$Z_1/a$}}
\put(147,60){\makebox(0,0)[l]{$X_1/a$}}
\end{picture}
}
\end{picture}
}
\end{center}

\vspace{-8 pt}
 \caption{ The equivalent channel to the GCZIC in Figure \ref{fig:ZIC}
, with inputs $X_1, X_2,$ outputs $ \tilde{Y}_{1}= Y_1/a,  Y_{2},$
and effective additive Gaussian noises
 $ Z_{1}/a $ and $ Z_{2}.$}
  \label{fig:ZIC-Eq}
\end{figure}
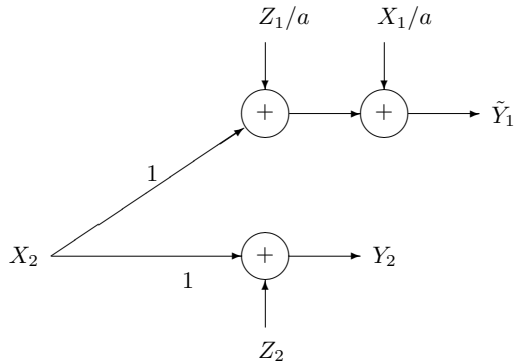
We need to find the range of $a$ for which $\tilde{Y}_{1}$ in Figure~\ref{fig:ZIC-Eq} (or
equivalently $Y_1$ in Figure~\ref{fig:ZIC}) is more capable than $Y_2$ in decoding $X_2$.
The condition (\ref{eq:defn1}) is equivalent to $ I(X_2, X_1;Y_1) \geq I(X_2;Y_2)$
because of channel transition matrix at
(\ref{eq:cond0}). For the GCZIC, this is equivalent to
\begin{align}
 h(X_1+aX_2+Z_1)&\geq h(X_2+Z_2) \label{eq:cond2}
\end{align}
for all $p(x_1,x_2) $.
With jointly Gaussian $X_1,X_2$ with correlation factor $\rho_{12}$
then \eqref{eq:cond2} becomes
\begin{align}
P_1 +a^2P_2 + 2a\rho_{12}\sqrt{P_1 P_2} +1&\geq P_2 +1.  \nonumber
\end{align}
Choosing $\rho_{12}=-1$, which is the worst case, the GCZIC
is more capable if $|a|\geq 1 + \sqrt{P_1 /P_2}.$ It is not clear, however, if
this condition implies more capability.

Let's now find the range of $a$ for which the strong interference
condition \eqref{eq:defn2} holds for the GCZIC. The proof follows
directly from the strong interference condition for the Gaussian
cognitive IC \cite{ElGamal}, which shows that condition \eqref{eq:defn2}
is equivalent to
\begin{align}
h(aX_2+Z_1|X_1)&\geq h(X_2+Z_2|X_1)  \nonumber \\
\Longleftrightarrow a^2 &\geq 1.
\end{align}
We can draw the conclusion that the strong interference condition
\eqref{eq:defn2} implies the more capability condition \eqref{eq:defn1}
for any Gaussian cognitive IC. This completes the proof that
Theorem~\ref{thm2} is applicable for the Gaussian CIC, and thus for the GCZIC, if $|a| \geq 1$.

\subsection{A new outer bound on the capacity of the GCZIC at strong interference}
\label{outer}
 The capacity of the GCZIC is partially unknown
for strong interference ($a^2 \geq 1$). At this regime, the best outer bound on
the capacity the GCZIC was established in \cite{Maric3}, Corollary 1.
In this section, we provide a new outer
bound for the capacity of the GCZIC at strong interference.
This outer bound is the Gaussian version of the outer bound in
Theorem \ref{thm2}, with the extra inequality
$R_2 \leq I(X_2;Y_2|X_1)$ \cite {Wu-Vishwanath},
\cite {Maric1} to that.

\begin{lem}
Any achievable rate pair $(R_1, R_{2})$ of the GCZIC with $a^2 \geq 1$, is
upper bounded by the following constraints
 \begin{align}
  R_2 &\leq  \frac{1}{2}\log{\left( \frac{1+ P_2}{ 1 +P_2(1- \rho_{2}^2)} \right)}
  \label{eq:O01} \\
   R_1 + R_2 &\leq  \frac{1}{2}\log{\left( 1+ \big(\sqrt {(1- \rho_{1}^2) P_1} +|a| \sqrt {(1-\rho_{2}^2 )P_2} \big)^2\right)} + \frac{1}{2}\log{\left( \frac{1+P_2}{ 1 +P_2(1- \rho_{2}^2)} \right)}  \label{eq:O02}\\
    R_1 +R_{2} &\leq \frac{1}{2}\log{ \left( 1+P_1 + a^2 P_2 + 2|a|\sqrt{\rho_{12}^2P_1P_2} \right)}
     \label{eq:O03}\\
     R_2 &\leq  \frac{1}{2}\log{\left(  1 +(1-\rho_{12}^2)P_2 \right)}
       \label{eq:O04}
 \end{align}
  where $|\rho_{12} - \rho_{1}\rho_{2}|\leq \sqrt {(1-\rho_{1}^2 )(1-\rho_{2}^2 )}$, and $|\rho_{i}| \leq 1, i=1,2$.
\label{lem1}
\end{lem}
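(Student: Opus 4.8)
The plan is to start from the single-letter outer region of Theorem~\ref{thm2}, which applies here because Section~\ref{strongIC} shows that the GCZIC with $a^2\ge 1$ satisfies the strong-interference condition \eqref{eq:defn2} (and hence the more-capable condition \eqref{eq:defn1}), and then to adjoin the extra cognitive constraint $R_2\le I(X_2;Y_2|X_1)$. The latter is the standard bound obtained by revealing $X_1^n$ to receiver~$2$ as genie side information: since $M_1$ is known at both encoders and is independent of $M_2$, a receiver that additionally knows $X_1^n$ can do no worse, so Fano's inequality gives $nR_2\le I(X_2^n;Y_2^n|X_1^n)+n\epsilon_n$, which single-letterizes (using the memoryless $Y_2=X_2+Z_2$ and that conditioning reduces entropy) to $I(X_2;Y_2|X_1)$. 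Thus every achievable $(R_1,R_2)$ lies in the union over $p(u,x_1,x_2)$ of the four-inequality polytope formed by \eqref{eq:Outer} together with $R_2\le I(X_2;Y_2|X_1)$, and it remains to specialize each mutual-information functional to the Gaussian law $Y_1=X_1+aX_2+Z_1$, $Y_2=X_2+Z_2$ and to show the union is exhausted by jointly Gaussian $(U,X_1,X_2)$.

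Granting Gaussian optimality, I would parametrize the extremal input by the three pairwise correlation coefficients $\rho_1=\rho_{UX_1}$, $\rho_2=\rho_{UX_2}$, $\rho_{12}=\rho_{X_1X_2}$; positive-semidefiniteness of the covariance of $(U,X_1,X_2)$ is exactly the partial-correlation bound $|\rho_{12}-\rho_1\rho_2|\le\sqrt{(1-\rho_1^2)(1-\rho_2^2)}$ in the statement. The three single-entropy terms follow from $h(\cdot)\le\tfrac12\log(2\pi e\,\mathrm{Var})$ with equality for Gaussians: using $\mathrm{Var}(Y_2)=1+P_2$, $\mathrm{Var}(Y_2|U)=1+P_2(1-\rho_2^2)$, $\mathrm{Var}(Y_2|X_1)=1+P_2(1-\rho_{12}^2)$ and $\mathrm{Var}(Y_1|X_1,X_2)=1$ yields \eqref{eq:O01}, \eqref{eq:O04}, and—after replacing the cross term $2a\rho_{12}\sqrt{P_1P_2}$ by the larger quantity $2|a|\sqrt{\rho_{12}^2P_1P_2}$—\eqref{eq:O03}. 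For the last term I would write $I(X_1;Y_1|U)=h(Y_1|U)-h(Y_1|X_1,U)$, lower-bound the subtracted entropy by $h(Z_1)$ (equivalently, drop the nonnegative $a^2\mathrm{Var}(X_2|X_1,U)$), and bound $\mathrm{Var}(Y_1|U)=1+P_1(1-\rho_1^2)+a^2P_2(1-\rho_2^2)+2a\sqrt{P_1P_2}(\rho_{12}-\rho_1\rho_2)$ by applying the partial-correlation bound to the cross term; the resulting completion of the square produces the factor $\big(\sqrt{(1-\rho_1^2)P_1}+|a|\sqrt{(1-\rho_2^2)P_2}\big)^2$ of \eqref{eq:O02}, and adding the already-computed $I(U;Y_2)$ finishes it.

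The genuine obstacle is the \emph{Gaussian-optimality} step, because the four functionals pull the input statistics in incompatible directions for non-Gaussian laws. The lower bound on $h(Y_2|U)$ needed for $I(U;Y_2)$ forces a conditional-entropy-power definition of $\rho_2$ (via the conditional entropy power inequality $2^{2h(Y_2|U)}\ge 2^{2h(X_2|U)}+2\pi e$), whereas the upper bound on $h(Y_1|U)$ needed for $I(X_1;Y_1|U)$ forces a conditional-variance definition; since entropy power never exceeds variance, these agree only in the Gaussian case, so no single $\rho_2$ certifies both \eqref{eq:O01} and \eqref{eq:O02} for an arbitrary input. I would resolve this as the paper announces, by proving separately that jointly Gaussian $(U,X_1,X_2)$ simultaneously maximizes the outer region, using a conditional-EPI extremization in the style of the more-capable-BC converse \cite{ElGamalMoreCapable}: fix the covariance of $(U,X_1,X_2)$, show each functional is maximized at the Gaussian law with that covariance, and conclude that enlarging the union to Gaussian inputs loses nothing. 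I expect this EPI argument to be the main technical burden; once it is in place the evaluation above is routine, and the slackness from dropping $a^2\mathrm{Var}(X_2|X_1,U)$ is harmless since it vanishes on the boundary $|\rho_{12}-\rho_1\rho_2|=\sqrt{(1-\rho_1^2)(1-\rho_2^2)}$, i.e. precisely where $X_2$ is a deterministic function of $(U,X_1)$ as in the achievability of Theorem~\ref{thm1}.
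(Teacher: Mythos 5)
Your overall route is the same as the paper's: invoke Theorem~\ref{thm2} (valid for the GCZIC with $a^2\ge 1$ via the strong-interference condition \eqref{eq:defn2}), adjoin the genie bound $R_2\le I(X_2;Y_2|X_1)$, argue Gaussian optimality, and then evaluate the four functionals under the covariance parametrization whose positive semidefiniteness gives $|\rho_{12}-\rho_1\rho_2|\le\sqrt{(1-\rho_1^2)(1-\rho_2^2)}$. Your per-inequality computations (dropping $h(aX_2|U,X_1)$ down to $h(Z_1)$, bounding the cross term in $\mathrm{Var}(Y_1|U)$ by the partial-correlation constraint, replacing $2a\rho_{12}\sqrt{P_1P_2}$ by $2|a||\rho_{12}|\sqrt{P_1P_2}$) are exactly the steps in the paper's Appendix C, and your remark that the slack vanishes precisely when $X_2$ is a function of $(U,X_1)$ appears there as well.

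The gap is in the step you yourself flag as the main burden. Your proposed resolution --- ``fix the covariance of $(U,X_1,X_2)$, show each functional is maximized at the Gaussian law with that covariance'' --- is false for $I(U;Y_2)$, and it contradicts your own (correct) diagnosis two sentences earlier. Among all joint laws of $(U,X_2)$ with a prescribed covariance, the jointly Gaussian one \emph{maximizes} $h(Y_2|U)$, so it does not maximize the difference $h(Y_2)-h(Y_2|U)$. Concretely, take $U\sim{\cal N}(0,1)$ and $X_2=\sqrt{P_2}\,g(U)$ with $g$ a measure-preserving nonlinear map chosen so that $g(U)\sim{\cal N}(0,1)$ while $E[U g(U)]=\rho_2<1$; then $h(Y_2|U)=h(Z_2)$, so $I(U;Y_2)=\frac12\log(1+P_2)$, which strictly exceeds the covariance-matched Gaussian value $\frac12\log\bigl((1+P_2)/(1+(1-\rho_2^2)P_2)\bigr)$. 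Hence covariance-preserving Gaussian maximization cannot certify \eqref{eq:O01}. The paper's argument is different: it lower-bounds $h(Y_2|U)$ via the conditional EPI, $2^{2h(Y_2|U)}\ge 2^{2h(X_2|U)}+2^{2h(Z_2)}$, so the Gaussian comparison law for \eqref{eq:O01} is pinned down by the conditional \emph{entropy power} of $X_2$ given $U$ (equality iff $X_2|U$ is Gaussian), not by the covariance of the true input; in the counterexample above this simply forces the comparison $\rho_2$ to be $1$. For what it is worth, your diagnosis of the entropy-power-versus-variance mismatch between \eqref{eq:O01} and \eqref{eq:O02} is sharper than anything in the paper --- Appendix C asserts joint Gaussian optimality after bounding each term separately and never reconciles the two definitions of $\rho_2$ --- but the cure you announce is neither the paper's argument nor one that would close that hole.
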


\noindent The proof of this lemma involves  showing that the jointly Gaussian distribution
is the optimum distribution and evaluating the outer bound in Theorem \ref{thm2}
for the GCZIC, then finding the covariance matrix of jointly Gaussian $X_1, X_2, U$
to maximize the RHS of all inequalities in Theorem \ref{thm1}. Details of
evaluation and maximization can be found in Appendix C.

\begin{figure}
  \centering
  \includegraphics [scale=0.7] {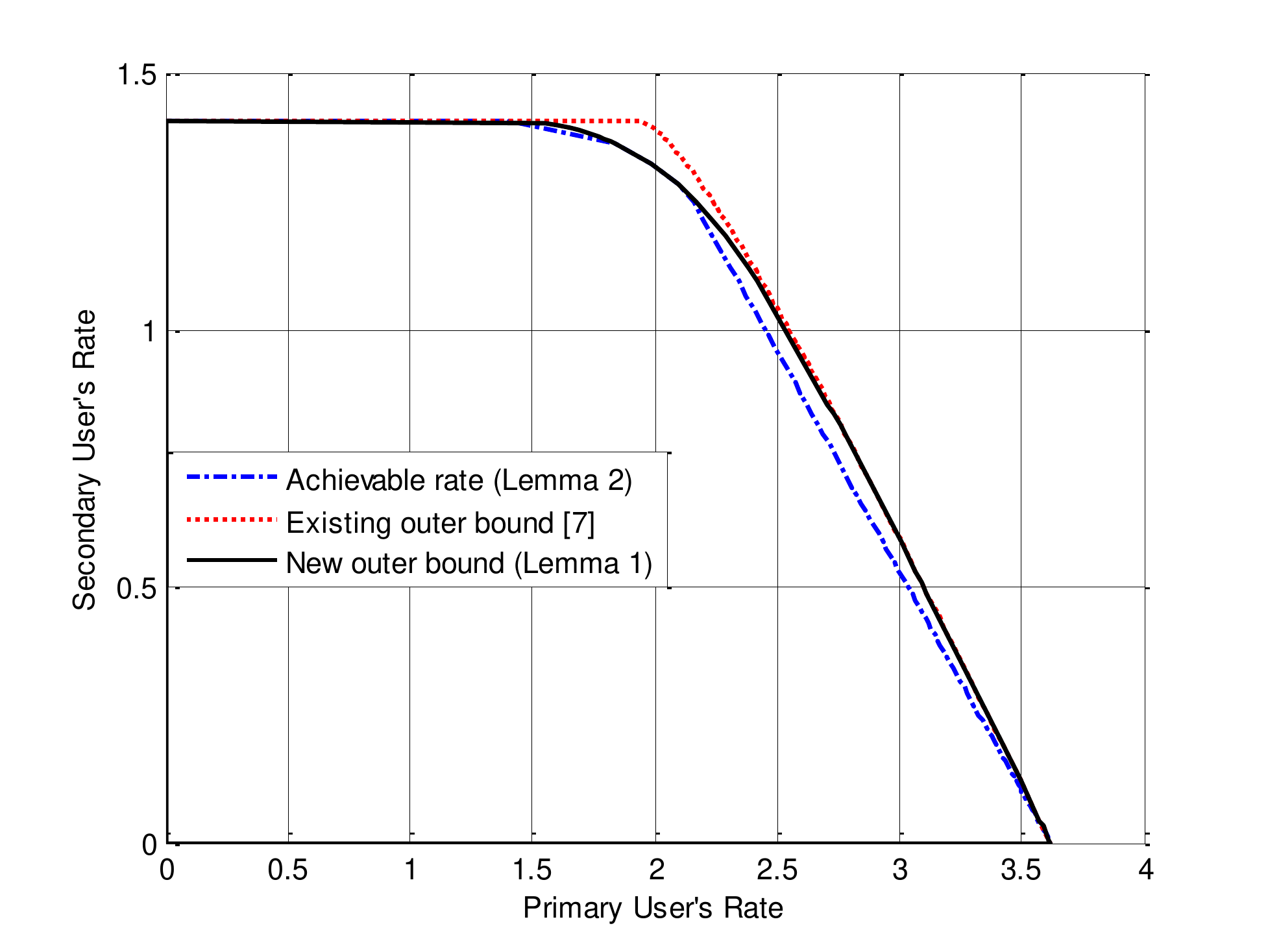}
  \caption{ Comparison of new outer bound in Lemma~\ref{lem1}
  and the best existing outer bound in \cite{Maric3}, for the GCZIC
  with $P_1=P_2=6, |a| = 4$.}
  \label{fig:Fig1}
\end{figure}

From the proof, it can be seen that without \eqref{eq:O04}, optimality condition is
$|\rho_{12} - \rho_{1}\rho_{2}|= \sqrt {(1-\rho_{1}^2 )(1-\rho_{2}^2 )}$ which is achieved
when $h(X_2|U,X_1)=0$, and implies that  $X_2$ is a function of $(U,X_1)$. This condition
is the same as that in the inner bound. However, the inner bound applies to independent
$X_1$ and $U$ whereas the outer bound is for general $X_1$ and $U$.
In Section~\ref{capacity}, we show that for a certain range of $a$ ($a^2 \geq 1+P_1$) the optimal
$X_1$ and $U$ for the outer bound are also independent, thus establish the capacity 
at that range.

Figure~\ref{fig:Fig1} numerically compares the outer bound proposed
 in Lemma~\ref{lem1} with the best existing outer bound in \cite{Maric3}.
It shows that new outer bound is strictly better than the existing one.
Moreover, as $|a|$ becomes larger, the gap between these two bounds increases,
and the outer bound in Lemma~\ref{lem1} gets much closer to the
achievable region.
This lemma establishes the best outer bound on the capacity region of the
GCZIC at strong inference ($a^2 \geq 1$). We prove this claim by
simplifying the outer bound in Lemma~\ref{lem1} into three simpler outer bounds
in the following Corollaries. To do so, we first introduce
 $\alpha = 1- \rho_{1}^2, \beta= 1- \rho_{2}^2, \gamma= \rho_{12}^2$,
 $\bar{x}=1-x$, and $x\in [0, 1]$ to make the bounds easier to read.
These Corollaries give the capacity of
the desired channel for two disjoint ranges of $a$.

\begin{cor}
Any achievable rate pair $(R_1, R_{2})$ of the GCZIC, is
upper bounded by the convex hull of the following region
 \begin{align}
   R_1 +R_{2} &\leq \frac{1}{2}\log{ \Big(1+P_1 + a^2 P_2 + 2 |a|\sqrt{\gamma P_1P_2}
   \Big) }
     \nonumber \\
     R_2 &\leq \frac{1}{2}\log{\left(1 + \bar{\gamma}P_2 \right)}
  \end{align}
   where $\gamma \in [0, 1]$.
\label{cor1}
\end{cor}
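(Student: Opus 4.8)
The plan is to obtain Corollary~\ref{cor1} as a relaxation of Lemma~\ref{lem1}, so no new calculation is needed. By Lemma~\ref{lem1}, every achievable pair $(R_1,R_2)$ of the GCZIC with $a^2\geq1$ satisfies, for some admissible parameters $\rho_1,\rho_2,\rho_{12}$ (obeying $|\rho_{12}-\rho_1\rho_2|\leq\sqrt{(1-\rho_1^2)(1-\rho_2^2)}$ and $|\rho_i|\leq1$), all four inequalities \eqref{eq:O01}--\eqref{eq:O04} simultaneously. The first step is simply to retain \eqref{eq:O03} and \eqref{eq:O04} and discard \eqref{eq:O01} and \eqref{eq:O02}. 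Since the achievable pair already satisfies all four, it a fortiori satisfies the two that remain; equivalently, deleting constraints only enlarges the region carved out for each fixed parameter choice, so the region defined by \eqref{eq:O03} and \eqref{eq:O04} alone still contains the capacity region and hence remains a valid outer bound.

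The second step is to rewrite the two surviving inequalities in the announced form. Both \eqref{eq:O03} and \eqref{eq:O04} depend on the parameters only through $\rho_{12}^2$. Using the abbreviation $\gamma=\rho_{12}^2$ and $\bar{\gamma}=1-\gamma$ introduced just above the corollary, and noting that $|\rho_{12}|\leq1$ forces $\gamma\in[0,1]$, the bound \eqref{eq:O03} reads $R_1+R_2\leq\frac{1}{2}\log(1+P_1+a^2P_2+2|a|\sqrt{\gamma P_1P_2})$ and \eqref{eq:O04} reads $R_2\leq\frac{1}{2}\log(1+\bar{\gamma}P_2)$. Thus each achievable pair lies in the region $\mathcal{S}(\gamma)$ defined by these two inequalities, for the particular value $\gamma=\rho_{12}^2$ supplied by its Lemma~\ref{lem1} parameters, and therefore in the union $\bigcup_{\gamma\in[0,1]}\mathcal{S}(\gamma)$.

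The final step invokes convexity. Because the capacity region of the GCZIC is convex by the standard time-sharing argument, and it is contained in $\bigcup_{\gamma\in[0,1]}\mathcal{S}(\gamma)$, it is also contained in the convex hull of that union, which is exactly the region claimed in Corollary~\ref{cor1}. The convex hull is the only point requiring care and is genuinely necessary: as $\gamma$ increases it loosens the sum-rate constraint \eqref{eq:O03} while tightening the $R_2$ constraint \eqref{eq:O04}, so the Pareto frontier traced out by $\gamma$ need not itself be convex, whereas the capacity region is. Taking the convex hull therefore yields the tightest convex region of this two-line form that still validly encloses the capacity region, completing the proof.
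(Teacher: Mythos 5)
Your proposal is correct and follows essentially the same route as the paper, which obtains Corollary~\ref{cor1} precisely by dropping \eqref{eq:O01} and \eqref{eq:O02} from Lemma~\ref{lem1} and noting the remaining constraints depend only on $\gamma=\rho_{12}^2$. Your added justification of the relaxation step and of why the convex hull is needed is merely a more explicit writing of the paper's one-line argument, not a different approach.
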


\noindent This follows immediately by removing \eqref{eq:O01}, \eqref{eq:O02} from the Lemma~\ref{lem1}.
This is the same as the best existing outer bound in \cite{Maric3}, and our
claim that Lemma~\ref{lem1} provides the best outer bound follows readily. It should
be highlighted that, this region has been recently proven in \cite{Rini} and \cite{JiangZ}
to be the capacity of the GCZIC for $1 \leq a \leq  \sqrt {1 + \frac{P1}{1+P2}}$.
\begin{cor}
Any achievable rate pair $(R_1, R_{2})$ of the GCZIC, is
upper bounded by the convex hull of the following region
 \begin{align}
  R_2 &\leq \frac{1}{2}\log{\Big(1 +
    \frac{\bar{\alpha}P_2}{1+\alpha P_2}\Big)}
  \nonumber\\
  R_1 + R_2 &\leq  \frac{1}{2}\log{\Big(1+(\sqrt{P_1} + |a| \sqrt{\alpha P_2}
    )^2\Big)} + \frac{1}{2}\log{\Big(1 +
    \frac{\bar{\alpha}P_2}{1+\alpha P_2}\Big)}
   \end{align}
   where $\alpha \in [0, 1]$.
\label{cor2}
\end{cor}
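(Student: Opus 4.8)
The plan is to deduce Corollary~\ref{cor2} from Lemma~\ref{lem1} in the same spirit that Corollary~\ref{cor1} was obtained, but discarding the complementary pair of constraints. Lemma~\ref{lem1} states that every achievable $(R_1,R_2)$ satisfies all four inequalities \eqref{eq:O01}--\eqref{eq:O04} for \emph{some} admissible triple $(\rho_1,\rho_2,\rho_{12})$, so the capacity region lies inside the union, over all such triples, of the four-inequality slices. Whereas Corollary~\ref{cor1} retains \eqref{eq:O03},\eqref{eq:O04}, here I would retain \eqref{eq:O01},\eqref{eq:O02} and drop \eqref{eq:O03},\eqref{eq:O04}. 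Dropping constraints only enlarges each slice, so the capacity region still lies in the union of the slices cut out by \eqref{eq:O01} and \eqref{eq:O02} alone. With $\alpha=1-\rho_1^2$ and $\beta=1-\rho_2^2$, these read $R_2\le\tfrac12\log\frac{1+P_2}{1+\beta P_2}$ and $R_1+R_2\le\tfrac12\log\!\big(1+(\sqrt{\alpha P_1}+|a|\sqrt{\beta P_2})^2\big)+\tfrac12\log\frac{1+P_2}{1+\beta P_2}$; note that $\rho_{12}$ has dropped out entirely, so its feasibility constraint no longer restricts $\alpha$ and $\beta$, which may range freely over $[0,1]$.

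Next I would remove $\alpha$ from the sum-rate bound by a monotonicity argument. Since $|\rho_1|\le1$ gives $\alpha\in[0,1]$ and hence $\sqrt{\alpha P_1}\le\sqrt{P_1}$, monotonicity of $\log$ yields $\tfrac12\log\!\big(1+(\sqrt{\alpha P_1}+|a|\sqrt{\beta P_2})^2\big)\le\tfrac12\log\!\big(1+(\sqrt{P_1}+|a|\sqrt{\beta P_2})^2\big)$ for every $\beta$. Replacing the first term by the second weakens \eqref{eq:O02} uniformly in the parameters, enlarging the union once more, and leaves a region governed by the single parameter $\beta$, which I relabel $\alpha\in[0,1]$ to match the statement. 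Equivalently, this step evaluates the sum-rate bound at $\rho_1=0$, i.e.\ $U$ independent of $X_1$, which is exactly the configuration that Section~\ref{capacity} shows to be optimal once $a^2\ge1+P_1$.

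To finish, I would argue pointwise: given an achievable pair, take the triple supplied by Lemma~\ref{lem1} and set the Corollary's parameter to $\alpha:=1-\rho_2^{2}$. Then the $R_2$ constraint of Corollary~\ref{cor2} coincides with \eqref{eq:O01}, which the pair satisfies, and its sum-rate constraint is the loosened form of \eqref{eq:O02}, which the pair also satisfies because it already satisfies the tighter \eqref{eq:O02}. Hence the pair lies in the single-parameter slice for this $\alpha$, so the capacity region is contained in the union over $\alpha\in[0,1]$ of these slices; being convex by time sharing, it is contained in the convex hull of that union, which is the asserted region. Along the way one checks the elementary identity $1+\frac{\bar{\alpha}P_2}{1+\alpha P_2}=\frac{1+P_2}{1+\alpha P_2}$ that reconciles the two ways of writing the $R_2$ bound.

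There is no serious computation in this argument; the only thing that needs care is the outer-bound logic. The key point is that both moves---deleting \eqref{eq:O03},\eqref{eq:O04} and replacing $\sqrt{\alpha P_1}$ by $\sqrt{P_1}$---are \emph{loosenings} that enlarge the union of slices and therefore preserve the outer-bound property, whereas simply \emph{fixing} $\rho_1=0$ as a parameter choice would shrink the union and would not be justified. Framing the step as a uniform upper bound on \eqref{eq:O02}, rather than as a restriction of the feasible parameters, is what makes the convex-hull conclusion legitimate, and this is the main subtlety to get right.
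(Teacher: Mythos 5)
Your proposal is correct and follows essentially the same route as the paper: drop \eqref{eq:O03} and \eqref{eq:O04} from Lemma~\ref{lem1}, then loosen \eqref{eq:O02} by setting $\rho_1=0$, i.e.\ replacing $\sqrt{(1-\rho_1^2)P_1}$ with $\sqrt{P_1}$, which is exactly what the paper means by ``achieved for $\rho_1=0\,(\beta=1)$.'' Your explicit framing of that step as a uniform upper bound (a loosening that enlarges the union) rather than a parameter restriction is the rigorous reading of the paper's terse argument, not a departure from it.
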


\noindent Similar to the Corollary~\ref{cor1}, we first eliminate \eqref{eq:O03}, \eqref{eq:O04}
in the Lemma~\ref{lem1}. Then, the proof of is immediate by looking at
\eqref{eq:O02}, and is achieved for $\rho_1=0 \, (\beta = 1)$.
As we will show later in Section~\ref{capacity}, Corollary~\ref{cor2} is also the capacity region for the GCZIC when
interference gain is considerably large ($|a|\geq \sqrt{ P_1P_2} + \sqrt{1 + P_1 + P_1P_2}$).

\begin{cor}
Any achievable rate pair $(R_1, R_{2})$ of the GCZIC, is
upper bounded by the convex hull of the following region
 \begin{align}
  R_2 &\leq \frac{1}{2}\log{\Big(1 +
    \frac{\bar{\alpha}P_2}{1+\alpha P_2}\Big)}
 \nonumber \\
  R_1 + R_2 &\leq  \frac{1}{2}\log{\Big(1+(\sqrt{\beta P_1} + |a| \sqrt{\alpha P_2}
    )^2\Big)} + \frac{1}{2}\log{\Big(1 +
    \frac{\bar{\alpha}P_2}{1+\alpha P_2}\Big)}
  \nonumber \\
    R_1 +R_{2} &\leq \frac{1}{2}\log{ \Big(1+P_1 + a^2 P_2 + 2|a|  (\sqrt{\alpha \beta} + \sqrt{\bar{\alpha} \bar{\beta}}) \sqrt{ P_1P_2}
   \Big) }
     \label{eq:O05}
  \end{align}
   where $\alpha, \beta \in [0, 1]$.
\label{cor3}
\end{cor}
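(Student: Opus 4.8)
The plan is to obtain Corollary~\ref{cor3} as a relaxation of the full outer bound of Lemma~\ref{lem1}: I keep \eqref{eq:O01} and \eqref{eq:O02} intact, weaken \eqref{eq:O03}, and discard \eqref{eq:O04} altogether. Because deleting any one of the four constraints can only enlarge the rate region, whatever results is automatically still a valid outer bound; the merit of the corollary is that this coarser region is the one that will later be matched by the achievable region for large $|a|$. Concretely, I would apply the substitution $\alpha = 1-\rho_2^2$, $\beta = 1-\rho_1^2$, $\gamma = \rho_{12}^2$ (so $\bar{\alpha}=\rho_2^2$, $\bar{\beta}=\rho_1^2$), under which, via the identity $\frac{1+P_2}{1+\alpha P_2} = 1+\frac{\bar{\alpha}P_2}{1+\alpha P_2}$, the constraints \eqref{eq:O01} and \eqref{eq:O02} turn verbatim into the first and second inequalities of the corollary.

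The substantive step is the treatment of the cross term in \eqref{eq:O03}. Once \eqref{eq:O04} is removed, the correlation $\rho_{12}$ survives only in \eqref{eq:O03}, whose right-hand side is strictly increasing in $|\rho_{12}|$. Since enlarging that right-hand side only enlarges the region, the union over all admissible $\rho_{12}$ at fixed $\rho_1,\rho_2$ collapses to the single region obtained by pushing $|\rho_{12}|$ to its largest feasible value. I would therefore maximize $|\rho_{12}|$ under the covariance-feasibility constraint $|\rho_{12}-\rho_1\rho_2|\le\sqrt{(1-\rho_1^2)(1-\rho_2^2)}$ recorded in Lemma~\ref{lem1}, which yields $|\rho_{12}|\le |\rho_1\rho_2|+\sqrt{(1-\rho_1^2)(1-\rho_2^2)} = \sqrt{\bar{\alpha}\bar{\beta}}+\sqrt{\alpha\beta}$. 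Substituting this extremal value converts the cross term $2|a|\sqrt{\gamma P_1P_2}$ of \eqref{eq:O03} into $2|a|(\sqrt{\alpha\beta}+\sqrt{\bar{\alpha}\bar{\beta}})\sqrt{P_1P_2}$, producing the third inequality \eqref{eq:O05}.

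I expect the main obstacle to be the feasibility bookkeeping around this maximization rather than any heavy computation. One must check that the maximizing value is itself an admissible correlation, i.e. that $\sqrt{\bar{\alpha}\bar{\beta}}+\sqrt{\alpha\beta}\le 1$; writing $|\rho_1|=\cos\theta_1$ and $|\rho_2|=\cos\theta_2$ this quantity is exactly $\cos(\theta_1-\theta_2)\le 1$, so the extremal value is always attained by a genuine covariance matrix and nothing is silently violated. It is precisely the removal of \eqref{eq:O04} that makes this clean: were \eqref{eq:O04} retained, its right-hand side would be decreasing in $|\rho_{12}|$, the union over $\rho_{12}$ would no longer sit at the extreme point, and $\rho_{12}$ could not be eliminated in closed form.

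Finally, since the three surviving inequalities depend only on $(\alpha,\beta)\in[0,1]^2$, I would form the union of these regions over $\alpha$ and $\beta$ and then pass to its convex hull. The convex-hull step is justified because the capacity region of the GCZIC is convex by time sharing, hence contained in the convex hull of any family of valid outer-bound regions. This reduces Lemma~\ref{lem1} to the statement of Corollary~\ref{cor3}.
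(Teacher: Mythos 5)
Your proposal is correct and follows essentially the same route as the paper: drop \eqref{eq:O04} from Lemma~\ref{lem1}, bound the cross term in \eqref{eq:O03} by the largest feasible correlation $|\rho_{12}|\leq|\rho_1\rho_2|+\sqrt{(1-\rho_1^2)(1-\rho_2^2)}$ (the paper gets this via the triangle inequality applied to the positive-semidefiniteness condition, which is exactly your maximization), and then substitute $\alpha=1-\rho_2^2$, $\beta=1-\rho_1^2$. Your added checks (that $\sqrt{\alpha\beta}+\sqrt{\bar{\alpha}\bar{\beta}}\leq 1$ is itself admissible, and the time-sharing justification of the convex hull) are sound refinements of the same argument rather than a different approach.
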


\noindent To prove this, we first remove \eqref{eq:O04} in the Lemma~\ref{lem1}.
Then, in \eqref{eq:O03} we use $a \rho_{12} \leq |a||\rho_{12}| \leq |a|(|\rho_{1}\rho_{2}| + \sqrt {(1-\rho_{1}^2 )(1-\rho_{2}^2 )})$
where the last inequality follows applying triangle inequality $|x-y| \geq |x| - |y|$ with
  $x= \rho_{12}$ and $y= \rho_{1}\rho_{2}$ to the LHS of $|\rho_{12} - \rho_{1}\rho_{2}|\leq \sqrt {(1-\rho_{1}^2 )(1-\rho_{2}^2 )}$. The maximum is attained when $\rho_{1}\rho_{2}$ has the same sign with $a$.
Finally, the outer bound in Corollary~\ref{cor3} is obtained considering that $\alpha \triangleq 1- \rho_{2}^2$, $\beta \triangleq 1- \rho_{1}^2$.
Similar to Corollary~\ref{cor2}, in Section~\ref{capacity}, we will show that for $\beta = 1$, Corollary~\ref{cor3} results in the capacity region of the GCZIC when $|a|\geq \sqrt{1 + P_1}$.

\subsection{Superposition coding-based inner bound for the GCZIC}
\label{inner_G}
In this section, we compute the Gaussian version of the achievable region
introduced in Section \ref{inner} for the GCZIC.
Following lemma, which extends Theorem \ref{thm1} to the GCZIC, provides
the achievable region by superposition coding.
\begin{lem}
Any rate pair $(R_1, R_{2})$ satisfying
 \begin{align}
   R_1 &\leq \frac{1}{2}\log{ \left (1 + (\sqrt{P_1} + a \sqrt{\alpha P_2 })^2 \right)}
   \nonumber \\ 
   R_2 &\leq \frac{1}{2}\log{\left (1+ \frac{\bar{\alpha} P_2}{1 + \alpha P_2} \right)}
   \nonumber \\ 
   R_1 + R_2 &\leq \frac{1}{2}\log{\Big(1+ P_1 + a^2 P_2 + 2 a \sqrt{\alpha P_1P
      _2}\Big)}
 \end{align}
with $\alpha \in [0, 1]$ is achievable for the GCZIC.
\label{lem2}
\end{lem}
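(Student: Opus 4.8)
The plan is to prove achievability by exhibiting a single jointly Gaussian input that respects the factorization required in Theorem~\ref{thm1}, and then evaluating its three mutual-information constraints directly. Since this is an inner bound, no extremality argument is needed: it suffices to display one admissible distribution for each $\alpha\in[0,1]$ and compute the resulting rates.

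First I would recall that Theorem~\ref{thm1} demands a law of the form $p(u)p(x_1)x_2(x_1,u)$, so $U$ and $X_1$ must be independent and $X_2$ a (possibly deterministic) function of the pair. Taking $a\ge 0$ without loss of generality, I would choose $X_1\sim\mathcal{N}(0,P_1)$ and $U\sim\mathcal{N}(0,\bar{\alpha}P_2)$ independent, and set
\begin{align}
X_2 = \sqrt{\tfrac{\alpha P_2}{P_1}}\,X_1 + U. \nonumber
\end{align}
Here $\alpha\in[0,1]$ is the fraction of the cognitive transmitter's power spent coherently assisting (beamforming along) the primary codeword $X_1$, while $U$ carries the cognitive message $M_2$. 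This construction is a deterministic function $x_2(X_1,U)$, keeps $U\perp X_1$, and meets the power constraint since $E[X_2^2]=\alpha P_2+\bar{\alpha}P_2=P_2$.

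Next I would evaluate the three constraints as short Gaussian entropy computations. For $I(U;Y_2)$, with $Y_2=X_2+Z_2$ and $X_1$ acting as Gaussian interference, computing $h(Y_2)-h(Y_2\mid U)$ gives $\tfrac12\log\frac{1+P_2}{1+\alpha P_2}$, which is exactly $\tfrac12\log\!\big(1+\tfrac{\bar{\alpha}P_2}{1+\alpha P_2}\big)$. For $I(X_1;Y_1\mid U)$, substituting $X_2$ into $Y_1=X_1+aX_2+Z_1$ yields $Y_1=\big(1+a\sqrt{\alpha P_2/P_1}\big)X_1+aU+Z_1$; conditioning on $U$ removes the $aU$ term and leaves an effective scalar Gaussian channel whose received signal power is $(\sqrt{P_1}+a\sqrt{\alpha P_2})^2$, producing the first bound. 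For the sum rate $I(X_1,X_2;Y_1)=h(Y_1)-h(Z_1)$, the variance of $X_1+aX_2$ collapses to $P_1+a^2P_2+2a\sqrt{\alpha P_1P_2}$, giving the third bound.

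The argument is essentially a sequence of routine variance computations, so the only points requiring care are structural rather than analytical: I must pick the sign of the $X_1$-component of $X_2$ to match the sign of $a$ so that the cross terms $2a\sqrt{\alpha P_1P_2}$ and $a\sqrt{\alpha P_2}$ add constructively (handled by taking $a\ge 0$), and I must confirm the chosen law genuinely lies in the admissible class of Theorem~\ref{thm1}, namely that $U\perp X_1$ and $X_2$ is a function of the pair. Taking the union over $\alpha\in[0,1]$ then yields the stated achievable region, completing the proof.
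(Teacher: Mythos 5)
Your proposal is correct and follows essentially the same route as the paper: both instantiate Theorem~\ref{thm1} with the identical Gaussian superposition input ($X_1=\sqrt{P_1}V$, $X_2=\sqrt{\alpha P_2}\,V+\sqrt{\bar{\alpha}P_2}\,U$ in the paper's notation, which is exactly your $X_2=\sqrt{\alpha P_2/P_1}\,X_1+U$) and read off the three rate constraints from the resulting Gaussian mutual informations. If anything, your observation that an inner bound needs no extremality argument is cleaner than the paper's (superfluous) appeal to the Gaussian optimality shown in Lemma~\ref{lem1}, and your explicit sign-matching of the beamforming coefficient with $a$ makes precise what the paper leaves implicit.
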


\begin{proof}
The achievability of this region is straightforward by Theorem \ref{thm1}.
In the proof of Lemma~\ref{lem1}, we have shown that jointly
Gaussian input is the optimum input to maximize
$I(X_1;Y_1|U), I(U;Y_2), $ and $I(X_1,X_2;Y_1)$.
The cognitive user partially uses its power to help send the
codewords of the primary user. $X_2$ contains two independent Gaussian parts,
$X_2 = \sqrt{\alpha P_2}V(m_1)+\sqrt{\bar{\alpha} P_2}U(m_2)$.
The primary user dedicates its whole power to transmit $m_1$, as
$X_1 = \sqrt{P_1}V(m_1)$.
Decoding in the primary receiver is based on successive
cancelation, where it first decode and subtract the cognitive
user's codeword in order to decode its own codeword free of interference.
Cognitive receiver simply decodes its
own codeword assuming the other codeword as interference.
\end{proof}

This achievable region even simplifies when the interference gain is
substantially large, that is $|a|\geq \sqrt{ P_1P_2} + \sqrt{1 + P_1 + P_1P_2}$. 
In the following section we show that,
when interference is very strong ($|a|\geq \sqrt{1 + P_1}$), the inner bound 
in Lemma~\ref{lem2} gives the capacity of the desired channel.

\subsection{Capacity of the GCZIC at very strong interference}
\label{capacity}

In this section, we prove that superposition coding
achieves the capacity of the GCZIC for $|a| \geq \sqrt{1 + P_1}$.
Specifically, we show that in Corollary~\ref{cor3}, $\beta$ has to be $1$
for this range of $a$.
Hence, the outer bound coincides with
the inner bound in Lemma~\ref{lem2} and gives
the capacity of the desired channel.
\begin{thm}
The capacity region the GCZIC for $|a|\geq \sqrt{1 + P_1}$
is the set of all rate pairs $(R_1, R_{2})$ satisfying
 \begin{align}
    R_2 &\leq   \frac{1}{2}\log{\Big(1 + \frac{\bar{\alpha}P_2}{1+\alpha P_2}\Big)}
    \label{eq:cap1} \\
  R_1 + R_2 & \leq \frac{1}{2}\log{\Big(1+(\sqrt{P_1} + |a| \sqrt{\alpha P_2}}\Big) +
  \frac{1}{2}\log{\Big(1 + \frac{\bar{\alpha}P_2}{1+\alpha P_2}\Big)}
   \label{eq:cap2}\\
   R_1 + R_2 &\leq \frac{1}{2}\log{\Big(1+ P_1 + a^2 P_2 + 2 |a| \sqrt{\alpha P_1P
      _2}\Big)} \label{eq:cap3}
  \end{align}
for $\alpha \in [0, 1]$.
\label{thm3}
\end{thm}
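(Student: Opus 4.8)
The plan is to prove the theorem by squeezing the capacity between the achievable region of Lemma~\ref{lem2} and the outer region of Lemma~\ref{lem1}, i.e. $\text{inner}\subseteq\text{capacity}\subseteq\text{outer}$. Writing $\mathcal{C}$ for the candidate region \eqref{eq:cap1}--\eqref{eq:cap3}, I would first observe that $\mathcal{C}$ is exactly Corollary~\ref{cor3} specialized to $\beta=1$: then $\bar\beta=0$ annihilates the cross term $\sqrt{\bar\alpha\bar\beta}$ in \eqref{eq:O05} and $\sqrt{\beta P_1}$ becomes $\sqrt{P_1}$, reproducing \eqref{eq:cap1}--\eqref{eq:cap3}. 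It therefore suffices to prove (i) the inner region of Lemma~\ref{lem2} equals $\mathcal{C}$, and (ii) the outer region of Lemma~\ref{lem1} is contained in $\mathcal{C}$ whenever $|a|\ge\sqrt{1+P_1}$; together these give $\mathcal{C}=\text{inner}\subseteq\text{capacity}\subseteq\text{outer}\subseteq\mathcal{C}$, hence equality.

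For (i) I would show that, after setting $\beta=1$, the two sum-rate constraints \eqref{eq:cap2}--\eqref{eq:cap3} together with the $R_2$-constraint \eqref{eq:cap1} describe the same union (over $\alpha\in[0,1]$) as the rate triple $R_1\le I(X_1;Y_1|U),\ R_2\le I(U;Y_2),\ R_1+R_2\le I(X_1,X_2;Y_1)$ of Theorem~\ref{thm1} evaluated for the Gaussian input of Lemma~\ref{lem2}. The only nonroutine point is the replacement of the explicit $R_1$-bound by the sum-bound \eqref{eq:cap2}; this is the standard broadcast-channel region manipulation, and it closes here because the two sum-rates coincide at $\alpha=1$, where $\tfrac12\log(1+(\sqrt{P_1}+|a|\sqrt{P_2})^2)$ equals both \eqref{eq:cap2} and \eqref{eq:cap3}. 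Taking the union over $\alpha$ then identifies the two descriptions.

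For (ii) I would treat Lemma~\ref{lem1} as an optimization over the admissible correlation triple, equivalently over $(\alpha,\beta,\gamma)$ in the Corollary~\ref{cor3} parametrization, and argue that the optimum has $\rho_1=0$ (that is, $\beta=1$, so $U$ is independent of $X_1$). Fixing $\alpha$ (hence \eqref{eq:O01}/\eqref{eq:cap1}), the superposition sum-bound \eqref{eq:O02} is strictly increasing in $\beta$, while the multiple-access sum-bound \eqref{eq:O03} increases with $|\rho_{12}|$, which the constraint \eqref{eq:O04} penalizes by capping $R_2\le\tfrac12\log(1+(1-\rho_{12}^2)P_2)$. The crux is to show that when $a^2\ge1+P_1$ this tension is resolved at $\beta=1$: raising $\beta$ to $1$ maximizes \eqref{eq:O02}, and comparing \eqref{eq:O02} with \eqref{eq:O03} at $\beta=1$ (after clearing logarithms) the difference factors through the sign of $a^2-1-P_1-2|a|\sqrt{\alpha P_1P_2}$, which at $\alpha\to0$ changes sign exactly at $a^2=1+P_1$. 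I would then verify that \eqref{eq:O04} forbids the crossover point of \eqref{eq:O02} and \eqref{eq:O03} that an input with $\rho_1\ne0$ would otherwise reach, so that no rate pair outside $\mathcal{C}$ survives.

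The main obstacle is precisely step (ii). Because the multiple-access bound \eqref{eq:O03} can be the active sum-constraint for large $\alpha$, one cannot conclude $\beta=1$ by a naive pointwise-in-$\alpha$ comparison: for such $\alpha$ the condition $a^2\ge1+P_1+2|a|\sqrt{\alpha P_1P_2}$ is strictly stronger than $a^2\ge1+P_1$, and a relaxation that drops \eqref{eq:O04} (such as Corollary~\ref{cor3} by itself) is strictly looser than $\mathcal{C}$ in this regime. Retaining \eqref{eq:O04} is therefore essential, since it is exactly what excludes the extra rate points that correlating $U$ with $X_1$ would create while simultaneously reducing the admissible $R_2$. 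Carrying out the joint maximization over the correlations and confirming that $\rho_1=0$ is optimal for every $\alpha$ precisely when $|a|\ge\sqrt{1+P_1}$ is the delicate computational core; once established, (i) and (ii) yield $\text{inner}=\mathcal{C}=\text{outer}$ and hence the claimed capacity region.
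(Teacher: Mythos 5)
Your step (i) is correct and coincides with the paper's own first move: the paper likewise identifies the theorem's region with the $\beta=1$ slice of Corollary~\ref{cor3} and uses El Gamal's rate-transfer argument to show it equals the inner bound of Lemma~\ref{lem2}. The genuine gap is step (ii), which you never execute (you defer it as ``the delicate computational core''), and which in fact cannot be completed as described: the optimum of Lemma~\ref{lem1} is \emph{not} at $\rho_1=0$, even with \eqref{eq:O04} retained. What your sketch misses is that $\rho_1\neq0$ relaxes the admissibility constraint $|\rho_{12}-\rho_1\rho_2|\leq\sqrt{(1-\rho_1^2)(1-\rho_2^2)}$, letting $|\rho_{12}|$ exceed $\sqrt{1-\rho_2^2}$; this lifts the MAC bound \eqref{eq:O03} above the superposition bound \eqref{eq:O02} (which, as stated in Lemma~\ref{lem1}, is already maximized over $\rho_{12}$ and no longer depends on it), while $\rho_{12}$ can still be kept small enough that \eqref{eq:O04} is no tighter than \eqref{eq:O01}. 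Concretely, take $P_1=P_2=1$, $a^2=3\geq 1+P_1$, and $(\rho_1^2,\rho_2^2,\rho_{12}^2)=(0.05,\,0.5,\,2/3)$, which is admissible since $|\sqrt{2/3}-\sqrt{0.025}|\approx 0.658\leq\sqrt{0.475}\approx 0.689$. Then \eqref{eq:O01} and \eqref{eq:O04} both read $R_2\leq\frac{1}{2}\log\frac{4}{3}$, while \eqref{eq:O02} gives $R_1+R_2\leq\frac{1}{2}\log\bigl(\tfrac{4}{3}\bigl(1+(\sqrt{0.95}+\sqrt{1.5})^2\bigr)\bigr)\approx\frac{1}{2}\log 7.78$ and \eqref{eq:O03} gives $R_1+R_2\leq\frac{1}{2}\log(5+2\sqrt{2})\approx\frac{1}{2}\log 7.83$; so Lemma~\ref{lem1} admits a pair with $R_2=\frac{1}{2}\log\frac{4}{3}$ and $R_1+R_2\approx\frac{1}{2}\log 7.78$. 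But in the theorem's region, $R_2\geq\frac{1}{2}\log\frac{4}{3}$ forces $\alpha\leq\frac{1}{2}$ in \eqref{eq:cap1}, whence \eqref{eq:cap3} caps $R_1+R_2\leq\frac{1}{2}\log(5+\sqrt{6})\approx\frac{1}{2}\log 7.45$ (the boundary is concave in this range, so convexification cannot absorb the excess). Hence the outer bound you propose to squeeze against strictly exceeds your $\mathcal{C}$ inside the theorem's range of $a$, and your final claim that \eqref{eq:O04} ``forbids the crossover point'' of \eqref{eq:O02} and \eqref{eq:O03} is false: at that crossover \eqref{eq:O04} evaluates to $\frac{1}{2}\log(2-\rho_{12}^2)>\frac{1}{2}\log\frac{4}{3}$ and is slack.

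For the record, this is also not the paper's route: the paper's converse discards \eqref{eq:O04}, works with Corollary~\ref{cor3} alone, and argues that $\beta=1$ is ``forced'' because otherwise, for the same $\alpha$, the outer region would lie below the inner region (an argument that compares individual members of two parametrized unions rather than the unions themselves). Your instinct that dropping \eqref{eq:O04} is too lossy is reasonable, but your remedy does not repair matters, since the counterexample above satisfies \eqref{eq:O04}; indeed, read as $(\alpha,\beta)=(0.5,0.95)$, the same rate pair lies in Corollary~\ref{cor3}, so the collapse to $\beta=1$ is equally problematic there. What step (ii) actually requires is a converse strictly tighter than Lemma~\ref{lem1} as stated --- for example, retaining the term $h(aX_2|U,X_1)$ (equivalently $\mathrm{Var}(X_2|X_1,U)$) in the bound on $I(X_1;Y_1|U)$, which eliminates the offending points but then demands a separate argument that jointly Gaussian inputs remain optimal for that unrelaxed functional. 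As written, your proposal establishes only the achievability half of Theorem~\ref{thm3}; its converse half is missing, and the outline you give for it breaks down at precisely the point you flagged as delicate.
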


\begin{proof}
We want to show that for $|a|\geq \sqrt{1 + P_1}$ the outer bound in
Corollary~\ref{cor3} and the inner bound in Lemma~\ref{lem2} coincide.
Let's define $\mathcal R_{i}$ as the set of all rate pairs satisfying the
constraints in Lemma~\ref{lem2} and $\mathcal R_{o}$ as the set
of all rate pairs satisfying \eqref{eq:cap1}-\eqref{eq:cap3}.
Using the same argument as El Gamal \cite{ElGamalMoreCapable}, we
 can show that $\mathcal R_{i} \equiv \mathcal R_{o}$.
Here $R_{2}$ can be thought of as the rate of common message which can be decoded at both receivers
while $R_{1}$ is the rate of the private message. Now $(R_{2}, R_{1}) \in \mathcal R_{o} $
if and only if $(R_{2}-t, R_{1}+t) \in \mathcal R_{o} $ for any
$0 \leq t \leq R_{2}$. This means the common rate ($R_{2}$) can be
partially or entirely private. Thus region $\mathcal R_{o}$
can be represented as $\mathcal R_{i}$.

We next show that the outer bound in Corollary~\ref{cor3} simplifies
to $\mathcal R_{o}$. To do so, it suffices to show
that for $|a|\geq \sqrt{1 + P_1}$, $\beta$ has to be $1$ in Corollary~\ref{cor3}.
Consider the first two inequalities of the outer bound in Corollary~\ref{cor3};
we can see that on the boundary of this outer bound we must have
\begin{align}
   R_2 \leq  \frac{1}{2}\log{\Big(1+(\sqrt{\beta P_1} + |a| \sqrt{\alpha P_2}
    )^2\Big)}
     \label{eq:O07}
  \end{align}
Comparing this inequality with the first inequity in $\mathcal R_{i}$, we conclude
 that either $\beta$ has to be $1$ or the first inequity of $\mathcal R_{i}$ must be
loose; since otherwise the outer bound is less that the inner bound, which is not possible.
 For this inequality to be redundant in $\mathcal R_{i}$ we need
 \begin{align}
     \frac{1}{2}\log{\Big(1+ P_1 + a^2 P_2 + 2 |a| \sqrt{\alpha P_1P_2}\Big)} &<
     \frac{1}{2}\log{\left (1+ \frac{\bar{\alpha} P_2}{1 + \alpha P_2} \right)}+
      \frac{1}{2}\log{ \left (1 + (\sqrt{P_1} + |a| \sqrt{\alpha P_2 })^2 \right)}
      \label{eq:E8} \nonumber \\
       \Longleftrightarrow
     \frac{1}{2}\log{\Big(1+  \frac{a^2 \bar{\alpha} P_2 }{1 + (\sqrt{P_1} + |a| \sqrt{\alpha P_2 })^2} \Big)}
      &< \frac{1}{2}\log{\left (1+ \frac{\bar{\alpha} P_2}{1 + \alpha P_2} \right)}  \nonumber \\
        \Longleftrightarrow |a| &< \sqrt{\alpha P_1P_2} + \sqrt{1 + P_1 +\alpha P_1P_2}
 \end{align}
For \eqref{eq:E8} to hold with any $\alpha$ then
\begin{align}
     |a|< \sqrt{ 1+ P_1 }.
      \label{eq:E9}
 \end{align}
This implies the first inequality of Lemma~\ref{lem2} is redundant only
for $|a|< \sqrt{ 1+ P_1 }$. In other words, if $|a|\geq \sqrt{ 1+ P_1 }$ there
exist some $\alpha$ for which this inequality cannot be redundant; this in turn
enforces $\beta = 1$. Thus, for this range of $a$, the outer bound in 
Corollary~\ref{cor3} simplifies to $\mathcal R_{o}$.
Note also that with $\beta = 1 \, (\rho_1=0)$ 
the optimal input for the outer bound is the same as the input for the
inner bound (i.e., $X_1$, $U$ are independent and $X_2 =x_2(U,X_1$),
 thus the capacity is established.


\end{proof}

As a special case, when the third constraint is redundant
in $\mathcal R_{i}$ and $\mathcal R_{o}$, the outer bound in Corollary~\ref{cor2} is tight.
For such an $a$, the capacity region in Theorem~\ref{thm3} further simplifies as below.

\begin{cor}
The capacity region the GCZIC for $|a|\geq \sqrt{ P_1P_2} + \sqrt{1 + P_1 + P_1P_2}$
is the set of all rate pairs $(R_1, R_{2})$ satisfying
 \begin{align}
    R_1 &\leq  \frac{1}{2}\log{\Big(1+(\sqrt{P_1} + |a| \sqrt{\alpha P_2}
    )^2\Big)}
  \nonumber \\
  R_2 &\leq \frac{1}{2}\log{\Big(1 +
    \frac{\bar{\alpha}P_2}{1+\alpha P_2}\Big)}
  \label{eq:capacit1}
  \end{align}
for $\alpha \in [0, 1]$.
\label{cor4}
\end{cor}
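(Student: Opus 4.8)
The plan is to obtain Corollary~\ref{cor4} as a direct specialization of the capacity region already established in Theorem~\ref{thm3}, with no new information-theoretic machinery required. First I would observe that the range $|a| \geq \sqrt{P_1 P_2} + \sqrt{1 + P_1 + P_1 P_2}$ is contained in the range $|a| \geq \sqrt{1 + P_1}$ of Theorem~\ref{thm3}, since $\sqrt{1 + P_1 + P_1 P_2} \geq \sqrt{1 + P_1}$ and $\sqrt{P_1 P_2} \geq 0$. Hence on this range the capacity region is exactly the set \eqref{eq:cap1}--\eqref{eq:cap3}, equivalently the inner-bound region $\mathcal R_{i}$ of Lemma~\ref{lem2}, whose three constraints are the two individual-rate bounds in \eqref{eq:capacit1} together with the sum-rate bound \eqref{eq:cap3}. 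Comparing $\mathcal R_{i}$ with the claimed region, the only difference is the absence of \eqref{eq:cap3}; thus it suffices to show that the sum-rate constraint is redundant throughout the range of $a$ under consideration.

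The redundancy of \eqref{eq:cap3} means that its right-hand side $\frac{1}{2}\log(1 + P_1 + a^2 P_2 + 2|a|\sqrt{\alpha P_1 P_2})$ is no smaller than the sum-rate bound forced by the two individual constraints, namely $\frac{1}{2}\log(1 + (\sqrt{P_1} + |a|\sqrt{\alpha P_2})^2) + \frac{1}{2}\log(1 + \frac{\bar{\alpha} P_2}{1 + \alpha P_2})$. This is precisely the complement of the inequality analyzed in the proof of Theorem~\ref{thm3}, so I would reuse that same chain of equivalences read with the inequality reversed, concluding that for a fixed $\alpha$ the constraint \eqref{eq:cap3} is redundant if and only if $|a| \geq \sqrt{\alpha P_1 P_2} + \sqrt{1 + P_1 + \alpha P_1 P_2}$.

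Since the capacity region is a union over $\alpha \in [0,1]$, I would then require that \eqref{eq:cap3} be inactive for every admissible $\alpha$ simultaneously. The threshold $\sqrt{\alpha P_1 P_2} + \sqrt{1 + P_1 + \alpha P_1 P_2}$ is monotonically increasing in $\alpha$, so its supremum over $[0,1]$ is attained at $\alpha = 1$ and equals $\sqrt{P_1 P_2} + \sqrt{1 + P_1 + P_1 P_2}$. Consequently, exactly when $|a| \geq \sqrt{P_1 P_2} + \sqrt{1 + P_1 + P_1 P_2}$ the sum-rate constraint is dropped for all $\alpha$, and $\mathcal R_{i}$ collapses to the two bounds in \eqref{eq:capacit1}. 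I expect the only delicate point to be this uniform-in-$\alpha$ step: one must justify the monotonicity of the threshold and check that removing \eqref{eq:cap3} from the inner description $\mathcal R_{i}$ yields the same region as removing the corresponding sum-rate constraint from the outer description (which is just Corollary~\ref{cor2}), so that the inner and outer representations remain consistent after the reduction and the stated region is indeed the capacity.
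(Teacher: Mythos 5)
Your proposal is correct, and its computational core is identical to the paper's: the chain of equivalences showing that the sum-rate constraint is implied by the two individual constraints precisely when $|a| \geq \sqrt{\alpha P_1P_2} + \sqrt{1+P_1+\alpha P_1P_2}$, followed by taking the worst case over $\alpha$, is exactly \eqref{eq:E5}--\eqref{eq:E6} in the paper's proof. Where you genuinely differ is in how optimality is certified. The paper does not invoke Theorem~\ref{thm3} in its proof: it starts from the inner bound of Lemma~\ref{lem2}, removes the redundant third constraint, and then argues separately that the resulting two-constraint region coincides, for each fixed $\alpha$, with the outer bound of Corollary~\ref{cor2} --- which simultaneously establishes that Corollary~\ref{cor2} is tight in this regime. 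You instead piggyback on Theorem~\ref{thm3}: since $\sqrt{P_1P_2}+\sqrt{1+P_1+P_1P_2} \geq \sqrt{1+P_1}$, the capacity region is already known to equal $\mathcal R_{i}$ on your range, so dropping a constraint implied by the others is a purely set-theoretic simplification requiring no new inner/outer comparison. This makes your route more economical, and it renders your closing ``delicate point'' vacuous: once Theorem~\ref{thm3} is cited, no consistency check against Corollary~\ref{cor2} is needed --- that check is exactly what the paper's second step supplies in place of Theorem~\ref{thm3} (at the price of not explicitly exhibiting Corollary~\ref{cor2} as the tight outer bound, which the paper's version does). One caveat both treatments share: the ``if and only if'' threshold is obtained by dividing by $\bar{\alpha}P_2$, so it is literally valid only for $\alpha < 1$; at $\alpha = 1$ the sum-rate constraint is redundant (with equality) for every $a$. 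The uniform condition $|a| \geq \sqrt{P_1P_2}+\sqrt{1+P_1+P_1P_2}$ then follows from the monotonicity and continuity of the threshold in $\alpha$, which you make explicit and the paper leaves implicit.
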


\begin{proof}
First consider the achievable rate region in Lemma~\ref{lem2}.
The third inequality becomes redundant if
 \begin{align}
     \frac{1}{2}\log{\Big(1+ P_1 + a^2 P_2 + 2 |a| \sqrt{\alpha P_1P_2}\Big)} &\geq
     \frac{1}{2}\log{\left (1+ \frac{\bar{\alpha} P_2}{1 + \alpha P_2} \right)}+
      \frac{1}{2}\log{ \left (1 + (\sqrt{P_1} + |a| \sqrt{\alpha P_2 })^2 \right)}
      \label{eq:E5} \nonumber \\
       \Longleftrightarrow
     \frac{1}{2}\log{\Big(1+  \frac{a^2 \bar{\alpha} P_2 }{1 + (\sqrt{P_1} + |a| \sqrt{\alpha P_2 })^2} \Big)}
      &\geq \frac{1}{2}\log{\left (1+ \frac{\bar{\alpha} P_2}{1 + \alpha P_2} \right)}  \nonumber \\
        \Longleftrightarrow |a| &\geq \sqrt{\alpha P_1P_2} + \sqrt{1 + P_1 +\alpha P_1P_2}
 \end{align}
For \eqref{eq:E5} to hold with any $\alpha$ then
\begin{align}
     |a|\geq \sqrt{ P_1P_2} + \sqrt{1 + P_1 + P_1P_2}.
      \label{eq:E6}
 \end{align}
On the other hand, without the third inequality,
the achievable region in Corollary~\ref{cor4} is also equal to the
outer bound in the Corollary~\ref{cor2}. This is because, for
the same value of $\alpha$, any point on the boundary of this region
 is on the boundary of the region in Corollary~\ref{cor2}, and vice versa.
 Hence, we obtain the capacity region in Corollary~\ref{cor4}
 if (\ref{eq:E6}) holds for any  $\alpha$, i.e.,
 $|a| \geq \sqrt{P_1P_2} + \sqrt{1 + P_1 + P_1P_2}$.
\end{proof}

Corollary~\ref{cor4} provides a special case of Theorem~\ref{thm3} with
simpler rate expressions.
In the concurrent and independent work \cite{Rini2},
``Theorem V.3. Capacity for S-G-CIFC" archives the same capacity result
in Corollary~\ref{cor4} using a different approach. The achievability
follows from a more general DPC-based scheme for the Gaussian cognitive
interference channel. Although the achievability seems
to be based on DPC, a close observation reveals that DPC is not necessary
to achieve the capacity.
This is because the parameter of DPC is zero ($\lambda = 0$) which means that, in effect,
DPC has no contribution; thus, the achievability scheme reduces
to superposition coding. The outer bound is completely different and is
based on the MIMO-BC outer bound \cite{Rini2}.

Theorem~\ref{thm3} shows that, when the interference is very strong,
the interfered primary receiver can decode the message of the interfering
cognitive transmitter in a rate higher than its own receiver.
This is as though the interference
link from the cognitive transmitter to the primary receiver is less noisy
than the direct link for the cognitive user's message. This sheds light on
the optimal coding scheme as well. That is, the primary user
encodes independently while the cognitive user encodes by superimposing
the primary user's codeword on its own. Then, the cognitive receiver decodes
its message treating primary user's codeword as noise. The primary receiver,
on the other hand, performs successive cancellation; it first decodes
the cognitive user's message, then subtracts it from the
received signal to decode its own message free of interference.

\section{Summary}
\label{sec:sum}

Analysis of the capacity results of the GCZIC shows that
superposition coding appears to be an indispensable tool
in any capacity-achieving techniques for this channel.
At very strong interference, superposition coding single-handedly
achieves its capacity. However, both DPC and superposition coding
are needed to establish the capacity when interference is
weak or intermediate.
Table I summarizes the capacity results for the GCZIC.
As it can be seen, up to now, the capacity of this channel is characterized except
for $\sqrt { 1 + \frac{P1}{1+P2}} < a < \sqrt{1 + P_1 } $.
For this range of $a$, a more general and inclusive form of
the proposed capacity-achieving outer bounds at $|a|\geq 1$, as represented
in Lemma \ref{lem1}, provides the best outer bound on the capacity region of the GCZIC.

\begin{table}[!t]
\renewcommand{\arraystretch}{1.3}
\caption{Summary of all capacity results for the GCZIC} \label{table1}
\centering
\begin{tabular}{|c|c|c|c|}
\hline
\bfseries Range of \textit{a} & \bfseries Capacity region & \bfseries Capacity achieving  & \bfseries Reference \\
\bfseries   & & \bfseries technique &  \\
\hline\hline\
 $|a| \leq 1$ & $
   R_{1} \leq \frac{1}{2}\log{\Big(1 + \frac{(\sqrt{P_1} + |a|
       \sqrt{\alpha  P_2 })^2} {1+ a^2 \bar{\alpha} P_2}}\Big)
 $ & superposition coding  & \cite{Jovicic-Viswanath}, \cite{Wu-Vishwanath} \\
 $  $ & $ R_2 \leq \frac{1}{2}\log{(1+\bar{\alpha} P_2)} $ & and DPC   &  \\
\hline

 $1 \leq |a| \leq  \sqrt {1 + \frac{P1}{1+P2}}$ & $
   R_1 + R_2 \leq \frac{1}{2}\log{ \Big(1+P_1 + a^2 P_2 + 2|a| \sqrt{\alpha P_1P_2}
   \Big) } $ & superposition coding &  \cite{Rini}, \cite{JiangZ} \\
 $  $ & $ R_2 \leq \frac{1}{2}\log{(1+\bar{\alpha} P_2)} $ &  and DPC &  \\
\hline

$ |a| > \sqrt { 1 + \frac{P1}{1+P2}} $ & unknown & unknown  & ---\\
$ |a| < \sqrt{1 + P_1 } $ & (Lemma~\ref{lem1} gives the best outer bound) &   &\\
\hline

$  $ & $  R_1\leq \frac{1}{2}\log{ \left (1 + (\sqrt{P_1} + |a| \sqrt{\alpha P_2 })^2 \right)}
$  & & \\
 $ |a| \geq \sqrt {1 + P1} $ & $  R_2 \leq \frac{1}{2}\log{\Big(1 +
    \frac{\bar{\alpha}P_2}{1+\alpha P_2}\Big)}$ & superposition coding  & Theorem [\ref{thm3}]  \\
$  $ & $ R_1 + R_2 \leq \frac{1}{2}\log{ \Big(1+P_1 + a^2 P_2 + 2|a| \sqrt{\alpha P_1P_2}
   \Big) }$ &  &  \\
\hline

$ |a| \geq \sqrt{P_1P_2} + \sqrt{1 + P_1 + P_1P_2}$ & $  R_1\leq \frac{1}{2}\log{ \left (1 + (\sqrt{P_1} + |a| \sqrt{\alpha P_2 })^2 \right)}
$ & superposition coding  & Corollary~[\ref{cor4}],  \\
 $  $ & $  R_2 \leq \frac{1}{2}\log{\left (1+ \frac{\bar{\alpha} P_2}{1 + \alpha P_2} \right)} $ &  & \cite{Rini2}  \\
\hline
\end{tabular}
\end{table}

\newpage

\section{Appendix}
\label{anx}
\subsection{Proof of Theorem \ref{thm1}}

\begin{proof}
We prove this theorem by showing the encoding, decoding,
and error analysis.

\subsubsection{Code construction and encoding}
Fix $p(x_1)$, $p(u)$ and $p(x_2|x_1, u)$ that achieve capacity. Randomly and
independently generate $2^{nR_{1}}$ sequences $x_1^n(m_{1})$, $m_{1}
\in [1: 2^{nR_{1}}]$ {\textit{i.i.d.}} according to $\prod_{i=1}^np(x_{1i})$. Also,
randomly and independently generate $2^{nR_{2}}$ sequences
$u^n(m_{2})$, $m_{2} \in [1: 2^{nR_{2}}]$ with elements {\textit{i.i.d.}}
according to $\prod_{i=1}^np_U(u_i)$. Next, for each pair of sequences $(u^n(m_2), x_1^n(m_1))$,
randomly and conditionally independently generate one
sequence $x_2^n(m_1, m_2)$ with elements {\textit{i.i.d.}} according to
$\prod_{i=1}^np_{X_2|X_1U}(x_{2i}|x_{1i}(m_1)u_{i}(m_2))$.

For encoding, to send messages $(m_1,m_2)$, the primary transmitter
just sends the codeword $x_1^n(m_1)$  and the secondary transmitter
sends the codeword $x_2^n(m_1,m_2)$ corresponding to those messages.

\subsubsection{Decoding}
Decoding is based on standard joint typicality. The less capable
receiver ($Y_2$) can only distinguish the auxiliary random variable
$U$. Decoder 2 declares that message $\hat{m}_{2}$ is sent if it is
a unique message such that $(u^n(\hat{m}_2), y^n_2) \in {\cal
T}^{(n)}_\epsilon$ ; otherwise it declares an error. Decoder 1
declares that message $\hat{m}_{1}$ is sent if it is a unique
message such that $(u^n(m_2), x_1^n(\hat{m}_1),
x_2^n(\hat{m}_1,m_2), y^n_2) \in {\cal T}^{(n)}_\epsilon$ for some
$m_2$; otherwise it declares an error.

\subsubsection{Error analysis}
To analyze the probability of error, without loss of generality, assume that
$(M_1,M_2)=(1,1)$ is sent. First we consider the average probability of
error for decoder 2. Let's define the error events
\begin{align}
 {\cal E}_{21} &= (U^n(1), Y^n_2) \notin {\cal T}^{(n)}_\epsilon \nonumber \\
 {\cal E}_{22} &= (U^n(m_2), Y^n_2) \in {\cal T}^{(n)}_\epsilon  \text { for some } m_2 \neq 1
  \label{eq:E2}
\end{align}
By union bound, the probability of error for decoder 2 is upper bounded by
\begin{align}
 {\cal E}_{2} = P({\cal E}_{21} \cup {\cal E}_{22} ) \leq P({\cal E}_{21}) + P({\cal E}_{22} ).
\end{align}
Now by law of large numbers (LLN) $ P({\cal E}_{21}) \rightarrow
0 \text { as }  n \rightarrow \infty. $ Also, since $U^n(m_2)$
is independent of $(U^n(1), Y^n_2)$ for $ m_2 \neq 1 $ by the packing
lemma \cite{ElGamal}
 $ P({\cal E}_{22}) \rightarrow 0 \text { as }  n \rightarrow \infty \text { if }  R_2 \leq I(U;Y_2)-\delta(\epsilon). $

Then, consider the average probability of error for decoder 1.
We define the following error events
\begin{align}
 {\cal E}_{11} =& (X_1^n(1), U^n(1), X_2^n(1,1), Y^n_1) \notin {\cal T}^{(n)}_\epsilon \nonumber \\
 {\cal E}_{12} =& (X_1^n(m_1), U^n(1), X_2^n(m_1,1), Y^n_1) \in {\cal
 T}^{(n)}_\epsilon \nonumber \\
 &\text { for some }  m_1 \neq 1  \nonumber \\
 {\cal E}_{13} =& (X_1^n(m_1), U^n(m_2), X_2^n(m_1,m_2), Y^n_1) \in {\cal T}^{(n)}_\epsilon  \nonumber \\
 &\text { for some } m_1 \neq 1, m_2 \neq 1.
  \label{eq:E1}
\end{align}
Using union bound, the probability of error for decoder 2 is upper bounded by
\begin{align}
 {\cal E}_{2} = P({\cal E}_{11} \cup {\cal E}_{12} \cup {\cal E}_{13} )
 \leq P({\cal E}_{11}) + P({\cal E}_{12}) +P({\cal E}_{13} )
\end{align}
Now we evaluate each term in the right-hand side (RHS) of this inequality when $ n \rightarrow
\infty$. First consider ${\cal E}_{11}$; again by LLN
$ P({\cal E}_{11}) \rightarrow 0 \text { as }  n \rightarrow
\infty. $ Next consider ${\cal E}_{12}$. For $ m_1 \neq 1 $ since
$(X_1(m_1), X_2(m_1,1))$ is conditionally independent of $Y^n_1$
given $U^n(1)$, by packing lemma
 $ P({\cal E}_{12}) \rightarrow 0 \text { as }  n \rightarrow
 \infty \text { given }  R_1 \leq I(X_1, X_2;Y_1|U)-\delta(\epsilon) = I(X_1;Y_1|U)-\delta(\epsilon)$
 because $X_2$ is a function of $(U, X_1)$.
Finally consider ${\cal E}_{13}$. For $ m_1 \neq 1 $ and $ m_2 \neq 1 $,
 $(X_1(m_1), U^n(m_2), X_2(m_1,m_2))$ is independent of
$Y^n_1$; hence, by packing lemma
 $ P({\cal E}_{13}) \rightarrow 0 \text { as }
 n \rightarrow \infty \text { if  }  R_1 + R_2 \leq
 I(X_1,U,X_2;Y_1)-\delta(\epsilon) = I(X_1,X_2;Y_1)
  -\delta(\epsilon)$. The equality follows because $U \rightarrow X_1,X_2 \rightarrow Y_1$
  forms a Markov chain.
The above analysis completes the proof of achievability since it shows that both receivers
can decode corresponding messages with the total probability of error tending to
zero if  \eqref{eq:inner} is satisfied. Therefore,
there exists a sequence of codes with error probability tending
to 0.
\end{proof}

\subsection{Proof of Theorem \ref{thm2}}
\subsubsection{For more capable DM-CIC}

The proof is also similar to the converse proof for the more
capable broadcast channel \cite{ElGamalMoreCapable}. We follow the
same line of proof as in \cite{ElGamal}; the only difference is
replacing $X_1$ in \cite{ElGamal} with $(X_1, X_2)$, since here
$X_2$ also encodes $M_1$.

We can bound the rates $R_2$ and $R_2 + R_2$ as
\begin{align}
  nR_2 & = H(M_2)  \nonumber \\
  & = I(M_2;Y^n_2) + H(M_2|Y^n_2)  \nonumber \\
  & \leq I(M_2;Y^n_2) + n\epsilon_{1n}  \label{eq:F-1}
  \end{align}
and
\begin{align}
  n(R_1 +  R_2) & =  H(M_1, M_2) \nonumber \\
  & =  H(M_1| M_2)+ H(M_2) \nonumber \\
  & =  I(M_1;Y^n_1|M_2) + H(M_1|Y^n_1,M_2) +  I(M_2;Y^n_2)  + H(M_2|Y^n_2) \nonumber \\
  & \leq I(M_1;Y^n_1|M_2) + I(M_2;Y^n_2)  + n\epsilon_{2n} \label{eq:F-2}
 \end{align}
 where \eqref{eq:F-1} and (\ref{eq:F-2}) follow by Fano's inequality. In a very similar fashion, sum rate can be also bounded by
  \begin{align}
  n(R_1 +  R_2) & \leq I(M_2;Y^n_2|M_1) + I(M_1;Y^n_1) + n\epsilon_{3n}. \label{eq:F-3}
 \end{align}
Now we manipulate the RHS of \eqref{eq:F-1}-\eqref{eq:F-3} to obtain the desired terms in \eqref{eq:Outer}.
First, consider the mutual information term in \eqref{eq:F-1}
\begin{align}
  I(M_2;Y^n_2)  & =\sum^n_{\substack{i=1}}I(M_2;Y_{2i}|Y^n_{2,i+1}) \label{eq:R2-1} \\
  & \leq \sum^n_{\substack{i=1}}I(M_2,Y^n_{2,i+1};Y_{2i})\nonumber \\
  & \leq \sum^n_{\substack{i=1}}I(M_2, Y^{i-1}_{1}, Y^n_{2,i+1};Y_{2i}) \nonumber \label{eq:R2-2} \\
  & =\sum^n_{\substack{i=1}}I(U_i;Y_{2i})
  \end{align}
in which (\ref{eq:R2-1}) follows from the chain rule,
 and we have defined the auxiliary random variable  $U_i
= (M_2, Y^{i-1}_{1}, Y^n_{2,i+1})$ moving to (\ref{eq:R2-2}).
Next, we bound the mutual information terms of the second inequality in (\ref{eq:F-2}).
\begin{align}
  I(M_1;Y^n_1|M_2) + I(M_2;Y^n_2)
  =& \sum^n_{\substack{i=1}}I(M_1; Y_{1i}|M_2,Y^{i-1}_{1}) + \sum^n_{\substack{i=1}}I(M_2;Y_{2i}|Y^n_{2,i+1})\nonumber \\
  \leq& \sum^n_{\substack{i=1}}I(M_1, Y^n_{2,i+1}; Y_{1i}|M_2,Y^{i-1}_{1}) + \sum^n_{\substack{i=1}}I(M_2, Y^n_{2,i+1};Y_{2i})\nonumber \\
  =& \sum^n_{\substack{i=1}}I(M_1, Y^n_{2,i+1}; Y_{1i}|M_2,Y^{i-1}_{1})   \nonumber \\ &
  + \sum^n_{\substack{i=1}}I(M_2, Y^n_{2,i+1},Y^{i-1}_{1};Y_{2i}) - \sum^n_{\substack{i=1}}I(Y^{i-1}_{1};Y_{2i}|M_2,Y^n_{2,i+1})\nonumber \\
  =& \sum^n_{\substack{i=1}}I(M_1;  Y_{1i}|M_2,Y^{i-1}_{1},Y^n_{2,i+1}) +
   \sum^n_{\substack{i=1}}I(M_2, Y^n_{2,i+1},Y^{i-1}_{1};Y_{2i}) \nonumber \\ & - \sum^n_{\substack{i=1}}I(Y^{i-1}_{1};Y_{2i},|M_2,Y^n_{2,i+1}) 
   + \sum^n_{\substack{i=1}}I(Y^n_{2,i+1};Y_{1i},|M_2,Y^{i-1}_{1})\nonumber
  \\  \label{eq:sum}
  =& \sum^n_{\substack{i=1}}I(M_1; Y_{1i}|U_{i}) +  \sum^n_{\substack{i=1}}I(U_i;Y_{2i}) \\
  \leq&  \sum^n_{\substack{i=1}}I(X_{1i}; Y_{1i}|U_{i}) + \sum^n_{\substack{i=1}}I(U_i;Y_{2i})
  \label{eq:Rsum1}
\end{align}
 where (\ref{eq:sum}) follows by the Csiszar sum identity and the
auxiliary random variable $U_i = (M_2, Y^{i-1}_{1}, Y^n_{2,i+1})$;
(\ref{eq:Rsum1}) follows by Markov chain $M_1 \rightarrow (X_1 U)
\rightarrow Y_1$.

For the third inequality, following steps similar to the bound for the second
inequality and defining $V_i := (M_1, Y^{i-1}_1,Y^n_{2,i+1})$ we can bound the
 mutual information terms  in (\ref{eq:F-3}) as

\begin{align}
   I(M_2;Y^n_2|M_1) + I(M_1;Y^n_1)
  & = \sum^n_{\substack{i=1}}I(M_{2}; Y_{2i}|M_1,Y_{2}^{i-1}) + \sum^n_{\substack{i=1}}I(M_1;Y_{2i}|Y_{2}^{i-1})  \nonumber \\
  \label{eq:a0}
  & \leq \sum^n_{\substack{i=1}}I(M_{2}; Y_{2i}|V_{i}) + \sum^n_{\substack{i=1}}I(V_{i};Y_{2i})  \\
  \label{eq:a1}
   & \leq \sum^n_{\substack{i=1}}I(X_{1i},X_{2i}; Y_{2i}|V_{i}) + \sum^n_{\substack{i=1}}I(V_i;Y_{1i})  \\
  \label{eq:b}
    & \leq \sum^n_{\substack{i=1}}I(X_{1i},X_{2i}; Y_{1i}|V_{i}) + \sum^n_{\substack{i=1}}I(V_i;Y_{1i}) \\
  & = \sum^n_{\substack{i=1}}I(X_{1i},X_{2i}; Y_{1i}) \nonumber
\end{align}
in which (\ref{eq:a0}) follows similar steps to the bound for the
first inequality on sum rate; (\ref{eq:a1}) follows from  $M_2 \rightarrow X_1, X_2
\rightarrow Y_2$; and (\ref{eq:b}) follows by (\ref{eq:defn1}) that gives
$I(X_1,X_2;Y_1) \geq I(X_1,X_2;Y_2)$, and implies that
$I(X_1,X_2; Y_2|V) \leq I(X_1,X_2; Y_1|V)$.

Next, we define the time sharing random variable $Q$
which is uniformly distributed over $[1:n]$ and is independent of
$(M_1,M_2,X^n_1,X^n_2,Y^n_1,Y^n_2)$. Also we define
$U=(Q,U_Q),X^n_1 = X^n_{1Q},X^n_2 =X^n_{2Q},Y^n_1 =Y^n_{1Q},Y^n_2=Y^n_{2Q}$. Then we have

\begin{align*}
  nR_2 &\leq \sum^n_{\substack{i=1}}I(U_i;Y_{2i}) + n\epsilon_{1n} \\
  & =nI(U;Y_{2}|Q) + n\epsilon_{1n}\\
  & \leq nI(U;Y_{2}) + n\epsilon_{1n}
\end{align*}
Similarly
\begin{align*}
  n(R_1 +  R_2) &\leq \sum^n_{\substack{i=1}}I(X_{1i}; Y_{1i}|U_{i}) + \sum^n_{\substack{i=1}}I(U_i;Y_{2i}) + n\epsilon_{2n} \\
  & =nI(X_1;Y_1|U,Q) + nI(U;Y_{2}|Q) + n\epsilon_{2n}\\
  & =nI(X_1;Y_1|U) + nI(U;Y_{2}|Q) + n\epsilon_{2n}\\
  &\leq nI(X_1;Y_1|U) + nI(U;Y_{2}) + n\epsilon_{2n}
 \end{align*}
 and
 \begin{align*}
  n(R_1 +  R_2) &\leq \sum^n_{\substack{i=1}}I(X_{1i},X_{2i}; Y_{1i})  + n\epsilon_{3n}.\\
   & =nI(X_1,X_1;Y_1|Q) + n\epsilon_{3n}\\
   &\leq nI(X_1,X_1;Y_1) + n\epsilon_{3n}
 \end{align*}

As $n \rightarrow \infty $, $\epsilon_{1n},\epsilon_{2n},$ and $ \epsilon_{3n}$
tend to zero because the probability of error is assumed to vanish.
This completes the proof for the more capable DM-CIC.

\subsubsection{For DM-CIC with strong interference}
\begin{proof}
The proof outer bound under the strong interference condition \eqref{eq:defn2} is
almost the same, with only slight difference in the proof of last inequality.
This is because the first two inequalities hold for any DM-CIC. Under
the strong interference condition \eqref{eq:defn2} we can we can bound the
 mutual information terms  in (\ref{eq:F-3}) as
\begin{align}
   I(M_2;Y^n_2|M_1) + I(M_1;Y^n_1)
  & \leq   I(M_2;Y^n_2|M_1,X^n_1) + I(M_1,X^n_1;Y^n_1) \nonumber\\
  & \leq   I(X^n_2;Y^n_2|M_1,X^n_1) + I(M_1,X^n_1;Y^n_1)  \nonumber\\
  & \leq   I(X^n_2;Y^n_1|M_1,X^n_1) + I(M_1,X^n_1;Y^n_1)  \label{eq:a6}\\
  & =  I(X^n_1,X^n_2;Y^n_1) \nonumber\\
  & \leq \sum^n_{\substack{i=1}}I(X_{1i},X_{2i}; Y_{1i})\nonumber
\end{align}
in which (\ref{eq:a6}) follows by (\ref{eq:defn2}) that gives
$I(X_2;Y_2|X_1) \leq I(X_2;Y_1|X_1)$, and implies that
$I(X_2;Y_2|X_1,M_1) \leq I(X_2;Y_1|X_1,M_1)$. The other steps are
straightforward.
Finally, this proves that Theorem~\ref{thm2} holds both for more capable
strong and interference DM-CIC.
\end{proof}

\subsection{Proof of Lemma \ref{lem1}}
\begin{proof}
We need to find the distribution that maximize the rate region in Theorem \ref{thm2} for
 the Gaussian channel. In what follows, we show that jointly Gaussian $X_1, X_2, U $ is optimum, i.e.,
it provides the largest outer bound for the Gaussian channel.
By maximum entropy theorem, the RHS of the third inequality is maximized
when $Y_1$ is Gaussian, thus
$Y_1= X_1 + a X_2 + Z_1  \text{ is Gaussian; that is } X_1 + a X_2 \text{ must be Gaussian. }$

Similarly,
\begin{align}
R_{2}&\leq I(U;Y_2)  \nonumber\\
&= h(Y_2) - h(Y_2|U) \nonumber\\
&\leq h(Y_2^G) - h(Y_2|U) \nonumber\\
&\leq h(Y_2^G)- \frac{1}{2}\log \left({2^{2h(X_2|U)} + 2^{2h(Z_2)}} \right )
\end{align}
where $Y_2^G$ denotes $Y_2$ when the inputs $X_1, X_2$ are Gaussian.
The last inequality follows by conditional version of entropy power inequality (EPI) for which
equality is achieved when $X_2|U  \text{ is Gaussian }$.
Likewise, for the term $I(X_1;Y_1|U)$ we can write
\begin{align*}
I(X_1;Y_1|U) &= h(Y_1|U) - h(Y_1|U,X_1)\\
&\leq h(X_1^G|U^G + a X_2^G|U^G + Z_1)- h(Y_1|U,X_1)\\
&\leq h(X_1^G|U^G + a X_2^G|U^G + Z_1)- \frac{1}{2}\log \left({2^{2h(aX_2|U,X_1)} + 2^{2h(Z_2)}} \right )
\end{align*}
where where $X_1^G, U^G$ denote$X_1, U$ when the inputs $X_1, X_2$ are Gaussian,
and inequalities follow by maximum entropy theorem and conditional version of EPI, respectively.
Again equality is achieved when all terms are Gaussian. Hence,
all inequalities in the outer bound are maximized with
jointly Gaussian $X_1, X_2, U $.

Now the problem is to find the optimum covariance matrix to maximize the bounds,
 i.e., to determine correlation coefficients
among $X_1, X_2,$ and $ U$.
Let $(U, X_1, X_2) \sim {\cal N} (0, K)$, which are correlated Gaussian random variables
$X_1, X_2,$ and $ U$ with covariance matrix
\begin{align}
K=COV(U, X_1, X_2)
 = \left( \begin{array}{ccccccc}
        P_U &  \rho_{1} \sqrt{P_U P_1} & \rho_{2} \sqrt{P_U P_2} \\
        \rho_{1} \sqrt{P_U P_1} &  P_1 & \rho_{12} \sqrt{P_1 P_2} \\
        \rho_{2} \sqrt{P_U P_2}  & \rho_{12} \sqrt{P_1 P_2} &  P_2 \\
      \end{array}\right)  \label{eq:cov}
\end{align}

\noindent Since the covariance matrix is positive semidefinite, the determinant of this
matrix must be nonnegative. That is
 \begin{align}
1-\rho_{12}^2-\rho_{1}^2+2\rho_{1}\rho_{2}\rho_{12}- \rho_{2}^2\geq 0.
 \end{align}
The inequality holds if
\begin{align}
\rho_{1}\rho_{2}- \sqrt {(1-\rho_{1}^2 )(1-\rho_{2}^2 )}  \leq  \rho_{12}
 \leq  \rho_{1}\rho_{2}  + \sqrt {(1-\rho_{1}^2 )(1-\rho_{2}^2 )}\nonumber
   \end{align}
or equivalently the covariance matrix is positive semidefinite if
\begin{align}
  |\rho_{12} -  \rho_{1}\rho_{2}|
 \leq  \sqrt {(1-\rho_{1}^2 )(1-\rho_{2}^2 )}. \label{eq:cond4}
\end{align}
Now we evaluate the rate constraints defining the outer bound in Theorem \ref{thm2}.
\begin{align}
R_{2}&\leq I(U;Y_2) \nonumber \\
&= h(Y_2) +  h(U) - h(Y_2,U)\nonumber \\
&= h(X_2 + Z_2) +  h(U) - h(X_2 + Z_2,U )\nonumber \\
&= \frac{1}{2}\log{2 \pi e \left(1 + P_2 \right)} +  \frac{1}{2}\log{2 \pi e \left( P_U \right)}
- \frac{1}{2}\log{(2 \pi e)^2 |K_1| }\nonumber \\
&= \frac{1}{2}\log{\left( \frac{1+ P_2}{ 1 +P_2(1- \rho_{2}^2)} \right)}
\end{align}
where
\begin{align}
K_1 = \left( \begin{array}{cc}
      1+ P_2 &   \rho_{2} \sqrt{P_U P_2}  \nonumber \\
        \rho_{2} \sqrt{P_U P_2}  & P_U   \nonumber \\
\end{array}\right) \nonumber
\end{align}
Similarly
\begin{align}
I(X_1;Y_1|U) &= h(Y_1|U) - h(Y_1|U,X_1) \nonumber \\
&= h(Y_1,U) - h(U) - \frac{1}{2}\log \left({2^{2h(aX_2|U,X_1)} + 2^{2h(Z_1)}} \right ) \nonumber\\ \label{eq:ineq1}
&\leq h(Y_1,U) - h(U) - \frac{1}{2}\log \left({2^{2h(Z_1)}} \right )\\
&= \frac{1}{2}\log{\left( |K_2|/ P_U  \right)}\nonumber
\end{align}
where \begin{align}
|K_2| &= \left|\left( \begin{array}{cc}
       1+P_1 + a^2 P_2 + 2a \rho_{12} \sqrt{ P_1P_2} &  \rho_{1} \sqrt{ P_U P_1} + a \rho_{2} \sqrt{ P_U P_2} \\
       \rho_{1} \sqrt{ P_U P_1} + a \rho_{12} \sqrt{ P_U P_2}  & P_U   \\
\end{array}\right)\right|\\\nonumber
&= P_U (1+P_1 + a^2 P_2 + 2a \rho_{12} \sqrt{ P_1P_2}- \rho_{1}^2 P_1 - a^2\rho_{2}^2 P_2 -  2a \rho_{1} \rho_{2}  \sqrt{P_1P_2})
\end{align}
To check if the inequality (\ref{eq:ineq1}) can hold with equality, we evaluate the term $2^{2h(aX_2|U,X_1)}$
\begin{align}
h(X_2|U,X_1) &= h(U,X_1,X_2) - h(U,X_1)\nonumber \\
&= \frac{1}{2}\log \left((2\pi e)^3|K| \right) - \frac{1}{2}\log{\left( (2\pi e)^2 P_U P_1 ( 1- \rho_{1}^2 ) \right)}\nonumber \\
&= \frac{1}{2}\log{2\pi e P_2 \left( \frac{1- \rho_{12}^2 - \rho_{1}^2 -\rho_{2}^2 + 2 \rho_{12}\rho_{2} \rho_{1}}{1-  \rho_{1}^2} \right)}
\label{eq:ineq2}
\end{align}
in which the covariance matrix $K$ is defined by (\ref{eq:cov}).
Since both numerator and denominator are nonnegative in (\ref{eq:ineq1}), the argument of this function is either zero or positive. Therefore,
 $\min ({2^{2h(aX_2|U,X_1)}})=0$ is achieved when
$1- \rho_{12}^2 - \rho_{1}^2 -\rho_{2}^2 + 2 \rho_{12}\rho_{2} \rho_{1}=0$, or equivalently, $|\rho_{12} -  \rho_{1}\rho_{2}|
 = \sqrt {(1-\rho_{1}^2 )(1-\rho_{2}^2 )}$. Note that
$h(aX_2|U,X_1)=0$ implies  $X_2$ to be a function of $(U,X_1)$.

Keeping this in mind that $|\rho_{12} -  \rho_{1}\rho_{2}|
 =  \sqrt {(1-\rho_{1}^2 )(1-\rho_{2}^2 )}$ is optimum condition for (\ref{eq:ineq1}),
 we evaluate $I(X_1;Y_1|U)$ as follows.
\begin{align}
I(X_1;Y_1|U) & \leq  \frac{1}{2}\log{\left( |K_2|/ P_U  \right)} \nonumber \\
&= \frac{1}{2}\log{\left( 1+(1- \rho_{1}^2) P_1 + a^2 (1-\rho_{2}^2 ) P_2 +
2a (\rho_{12} -\rho_{1} \rho_{2} )\sqrt{ P_1P_2}  \right)} \nonumber \\
&\leq \frac{1}{2}\log{\left( 1+(1- \rho_{1}^2) P_1 + a^2 (1-\rho_{2}^2 ) P_2 +
2|a| |\rho_{12} -\rho_{1} \rho_{2}|\sqrt{ P_1P_2}  \right)}  \nonumber\\
&\leq \frac{1}{2}\log{\left( 1+(1- \rho_{1}^2) P_1 + a^2 (1-\rho_{2}^2 ) P_2 +
2|a|\sqrt {(1-\rho_{1}^2 )(1-\rho_{2}^2 )P_1P_2} \right)} \nonumber
\end{align}
where, the last inequality follows from (\ref{eq:cond4}).
Interestingly, again
\begin{align}
 |\rho_{12} -  \rho_{1}\rho_{2}|
=  \sqrt {(1-\rho_{1}^2 )(1-\rho_{2}^2 )} \label{eq:cond5}
\end{align}
turns out to be the optimum condition.
From this two values for $\rho_{12}$ are plausible which are respectively
\begin{align}
 \rho_{12}^{(1)} = \rho_{1}\rho_{2} +  \sqrt {(1-\rho_{1}^2 )(1-\rho_{2}^2 )} \\
 \rho_{12}^{(2)} = \rho_{1}\rho_{2} - \sqrt {(1-\rho_{1}^2 )(1-\rho_{2}^2 )}
\end{align}
Then, it is also straightforward to calculate the third bound in Theorem \ref{thm2}
 to obtain
 \begin{align}
    R_1 +R_{2} &\leq \frac{1}{2}\log{ \left( 1+P_1 + a^2 P_2 + 2a \rho_{12} \sqrt{P_1P_2} \right)}\nonumber \\
    &\leq \frac{1}{2}\log{ \left( 1+P_1 + a^2 P_2 + 2  \max\{a\rho_{12}^{(1)} , a\rho_{12}^{(2)}\} \sqrt{P_1P_2} \right)} \nonumber
 \end{align}

\noindent Therefore, from the outer bound in Theorem \ref{thm2}, we compute the rate region
$\mathcal{R}_\mathbb{G}$ with following constraints
\begin{align}
  R_2 &\leq  \frac{1}{2}\log{\left( \frac{1+ P_2}{ 1 +P_2(1- \rho_{2}^2)} \right)}
 \nonumber\\
  R_1 + R_2 &\leq  \frac{1}{2}\log{\left( 1+(1- \rho_{1}^2) P_1 + a^2 (1-\rho_{2}^2 ) P_2 +
2|a| \sqrt {(1-\rho_{1}^2 )(1-\rho_{2}^2 )P_1P_2} \right)} \nonumber \\
   &+  \frac{1}{2}\log{\left( \frac{1+ P_2}{ 1 +P_2(1- \rho_{2}^2)} \right)}
  \nonumber \\
    R_1 +R_{2} &\leq \frac{1}{2}\log{ \left( 1+P_1 + a^2 P_2 + 2  \max\{a\rho_{12}^{(1)} ,
     a\rho_{12}^{(2)}\} \sqrt{P_1P_2} \right)} \nonumber
 \end{align}

\noindent As a last step, we can evaluate and add the Gaussian version of the standard
inequality $R_2 \leq I(X_2;Y_2|X_1)$ \cite {Wu-Vishwanath},
\cite {Maric1}, to these bounds; the corresponding inequality is
\begin{align}
     R_2 &\leq  \frac{1}{2}\log{\left(  1 +(1-\rho_{12}^2)P_2 \right)}
 \end{align}
As a result, the outer bound is as given in Lemma \ref{lem1}.
\end{proof}

\end{document}